\documentclass[runningheads,envcountsame]{llncs}
\usepackage[T1]{fontenc}
\usepackage[dvipsnames]{xcolor}
\usepackage{amsmath}    
\usepackage{amssymb}
\usepackage{graphicx}
\usepackage{subcaption}
\usepackage[ruled,vlined,linesnumbered]{algorithm2e}
\SetAlgoSkip{0pt}        
\SetAlgoInsideSkip{0pt}  
\SetInd{0.3em}{0.5em}   
\SetNlSty{}{}{}         
\SetAlgoNlRelativeSize{-1} 
\usepackage{enumerate}
\usepackage{cite}
\usepackage{tikz}
\usepackage{pifont}%

\usepackage{thmtools}
\usepackage{thm-restate} 

\usetikzlibrary{arrows}
\usetikzlibrary{shapes}
\usetikzlibrary{automata}
\usetikzlibrary{positioning,shadows,trees}
\tikzset{initial text={}}

\usepackage{booktabs} 
\usepackage{siunitx} 
\usepackage{tabularx}

 \usepackage{cleveref} 
\Crefname{figure}{Fig.}{Figs.}
\usepackage{lineno}

\usepackage{ltablex} 
\keepXColumns 

\usepackage[inline]{enumitem}
\setlist[enumerate]{topsep=2pt,itemsep=0pt}
\setlist[itemize]{topsep=2pt,itemsep=0pt}

\newcommand{\ra}{RA\xspace}
\newcommand{\dra}{DRA\xspace}
\newcommand{\wdra}{DRA\xspace}

\newcommand{\fullhdm}{hyper-data-minimal\xspace}

\newcommand{\fullhsm}{hyper-state-minimal\xspace}

\newcommand{\fullhm}{hyper-minimal\xspace}

\newcommand{\A}{\mathcal{A}}

\newcommand{\lenTA}{\textit{len}^2_\mathcal{A}}

\newcommand{\alphabet}{\Sigma}
\newcommand{\finwords}{\alphabet^*}

\newcommand{\lang}{L}

\newcommand{\emptyword}{\varepsilon}

\newcommand{\bigO}{\mathcal{O}}

\newcommand{\alphabetQ}{\mathbb{Q}}

\newcommand{\naturals}{\mathbb{N}}

\newcommand{\autA}{\A}
\newcommand{\autB}{\mathcal{B}}

\newcommand{\symdiff}{\ominus}
\newcommand{\merge}{\texttt{merge}}

\newcommand{\datamin}{\texttt{hyper-data-minimize}}

\newcommand{\mdot}{\!\cdot\!}
\iftrue
\usepackage{todonotes}
	\newcommand{\warn}[1]{\todo[inline,color=red!75,caption={W}]{\textbf{IMPORTANT: #1}}}
	\newcommand{\dd}[1]{\todo[inline,color=teal!10,caption={DiDe}]{\textbf{Di-De:} #1}}
	\newcommand{\qiyi}[1]{\todo[inline,color=Orchid!10,caption={Qiyi}]{\textbf{Qiyi:} #1}}
    \newcommand{\ly}[1]{\todo[inline,color=orange!10,caption={LY}]{\textbf{LY:} #1}}
     
\else
\newcommand{\warn}[1]{}
\newcommand{\dd}[1]{}
\newcommand{\qiyi}[1]{}
\newcommand{\ly}[1]{}

\fi

\begin{document}\sloppy
\title{Hyper-Minimization for Deterministic Register Automata}
%
%
\author{Yong Li\inst{1}\orcidID{0000-0002-7301-9234} \and
Qiyi Tang\inst{2}\orcidID{0000-0002-9265-3011} \and
Di-De Yen\inst{2}\orcidID{0000-0003-0045-9594}}

\authorrunning{Y. Li, Q. Tang and D. Yen} 

%
%
\institute{
Key Laboratory of System Software (Chinese Academy of Sciences), 
Institute of Software Chinese Academy of Sciences, Beijing, China\\
\email{liyong@ios.ac.cn} \and
School of Computer Science and Informatics, University of Liverpool, UK\\
\email{\{qiyi.tang,d.d.yen\}@liverpool.ac.uk}}
\maketitle              
%


\begin{abstract}


We investigate hyper-minimization for deterministic register automata (DRAs). 
We begin by introducing DRA counterparts of classical notions from deterministic finite automata. 
Building on these foundations, we present an algorithm for hyper-minimizing well-typed DRAs, where each state is associated with a unique register type. 
The resulting automata are minimal with respect to both the number of states and registers among all well-typed DRAs. 
We prove the correctness of the proposed algorithm, thereby establishing the decidability of hyper-minimization for well-typed DRAs.

\keywords{Register Automata  \and Hyper-Minimization \and Dense Domains.}
\end{abstract}
%
%
%




\section{Introduction}


The minimization problem is a fundamental topic in automata theory, dating back to~\cite{Nerode:57}, where an automaton is minimal if it has no smaller equivalent counterpart. For finite automata (FAs), minimality corresponds to having the smallest number of states among all equivalent FAs. While every FA can be minimized, the general problem is PSPACE-complete~\cite{Jiang:93:SIAM}; in contrast, deterministic finite automata (DFAs) admit an $\mathcal{O}(n\log n)$ minimization algorithm~\cite{Hopcroft:07:text}, where $n$ is the number of states.

For register automata (RAs), which extend FAs with registers for storing and comparing data values, minimality requires minimizing both the number of states and registers. The minimization problem for deterministic register automata (DRAs) over dense ordered and unordered domains is decidable~\cite{Ley:10:TR}.

The notion of DFA minimization was generalized to \emph{hyper-minimization} in~\cite{DBLP:journals/ita/BadrGS09}. A DFA $\autA'$ is hyper-minimal for a DFA $\autA$ if the symmetric difference between their languages is finite. For example, let $\autA$ recognize $L_{\textit{even}+x^{49}}$ over $\Sigma=\{a,b\}$, consisting of all even-length words together with $a^{49}$ and $b^{49}$. Any DFA for this language requires at least 101 states: one initial state, 49 states each to count $a^{49}$ and $b^{49}$, and two states to track parity. In contrast, a hyper-minimal DFA needs only two states, recognizing $L_{\textit{even}}$, which consists of all even-length words. This illustrates that allowing finitely many errors can yield significantly more succinct automata. An $\mathcal{O}(n\log n)$ algorithm for DFA hyper-minimization was later given in~\cite{Holzer:10:CIAA}, matching the complexity of standard minimization.

Hyper-minimization has been studied for several extensions of DFAs, including deterministic tree automata~\cite{Maletti:12:CIAA} and deterministic weighted automata~\cite{Maletti:11}. In contrast, it has not been systematically investigated for automata over infinite alphabets, such as DRAs or symbolic finite automata. One reason is that hyper-minimization is often considered less meaningful in this setting, since over infinite alphabets any non-empty symmetric difference is typically infinite \cite{Veanes:14:POPL}.

For example, consider extending $L_{\textit{even}+x^{49}}$ to the alphabet $\mathbb{Q}$, yielding the language of all even-length words together with all words of the form $c^{49}$ for $c \in \mathbb{Q}$. For any RA $\autA$, if $c^{49} \in \lang(\autA)$ for some $c \in \mathbb{Q}$, then $d^{49} \in \lang(\autA)$ for all $d \in \mathbb{Q}$. Hence, if $\autA$ does not recognize $L_{\textit{even}+x^{49}}$ exactly, the symmetric difference $L_{\textit{even}+x^{49}} \ominus \lang(\autA)$ is necessarily infinite.


Nevertheless, if $\lang(\autA)$ consists of all even-length words over $\mathbb{Q}$, then all words in $L_{\textit{even}+x^{49}} \ominus \lang(\autA)$ share the same \emph{word type}, namely words of the form $c^{49}$ for $c \in \mathbb{Q}$. Intuitively, two words have the same word type if they induce the same order relations among their positions. For example, the words $u = 3 \cdot 1 \cdot 5$ and $v = \frac{5}{2} \cdot 2 \cdot 9$ have the same word type over $\mathbb{Q}$ under the standard order $<$, as the relative ordering between any pair of positions is identical in both words.

Motivated by this observation, we study hyper-minimization of DRAs modulo word types. A DRA $\autA'$ is hyper-minimal for a given DRA $\autA$ if $\lang(\autA) \ominus \lang(\autA')$ contains only \emph{finitely many word types}. Under this notion, the hyper-minimal DRA for $L_{\textit{even}+x^{49}}$ over $\mathbb{Q}$ recognizes $L_{\textit{even}}$ and requires only two states.

In general, a minimal DRA need not exist due to a trade-off between the number of states and registers (see \Cref{app:ex_trade-off}); the same issue arises for hyper-minimization. However, for \emph{well-typed} DRAs—where each state has a unique register type—a minimal DRA is guaranteed to exist~\cite{Ley:10:TR}. We show that an analogous result holds for hyper-minimization, and present an algorithm for hyper-minimizing well-typed DRAs, thereby establishing its decidability.

\section{Preliminaries}

We use $\mathbb{R}$ (resp.\ $\mathbb{Q}$, $\mathbb{Z}$, $\mathbb{N}$) to denote the sets of real (resp.\ rational, integer, and non-negative) numbers.
A set $S \subseteq \mathbb{R}$ is \emph{dense} in $\mathbb{R}$ if, for every $s < t$, there exists $r \in S$ with $s < r < t$~\cite{Royden:10:book}. Accordingly, $\mathbb{Q}$ is dense in $\mathbb{R}$, whereas $\mathbb{Z}$ is not. 


Let $\Sigma$ be an alphabet and $R$ a binary relation on $\Sigma$, where $\Sigma$ may be finite or infinite. A \emph{(data) word} is a finite sequence over $\alphabet$.
For words $u = a_1 \dots a_m$, $v = b_1 \dots b_n$ in $\Sigma^*$ and a symbol $d \in \Sigma$, we write $d \in u$ if $d = a_i$ for some $1 \le i \le m$, $u \cup v$ as the set $\{a_1, \cdots, a_m, b_1,\cdots, b_n\}$ and $u \cdot v$ as the concatenation $a_1 \dots a_m b_1 \dots b_n$. Given two symbols $a$ and $b$, we use $u[a/b]$ to denote the word obtained by replacing each $a$ in $u$ with $b$.
The length of $u$ is denoted by $|u|$.
The relation $R$ induces an equivalence relation $\sim_R$ on $\Sigma^*$: for $u = a_1 \dots a_m$ and $v = b_1 \dots b_n$, we say $u$ and $v$ are in the same equivalence $\tau$ iff (1) $m = n$, and (2) $(a_i, a_j) \in R$ if and only if $(b_i, b_j) \in R$ for all $1 \le i, j \le n$.
Each equivalence class $\tau$ of $\sim_R$ is called a \emph{$\sim_R$-word type}. 
$\tau$ has length $n$ if $|\tau| = n$. 
A language $L \subseteq \Sigma^*$ is a \emph{data language} over $(\Sigma, R)$ if, for all words $u, v$ of the same word type, $u \in L \Leftrightarrow v \in L$. 
In this paper, we consider only two kinds of domains: dense ordered domains or $R$ is the identity relation.

Given two data languages $L$ and $L'$, we say that $L$ and $L'$ are \emph{almost-equal}
if there exists a \emph{finite} set $T$ of word types such that every word in the symmetric difference $L \ominus L'$ has a word type $\tau \in T$. In other words, two almost-equal data languages differ only on finitely many word types.

\emph{Register automata} (\ra{s}) extend finite-state automata to infinite alphabets and recognize data languages. There are several equivalent definitions in the literature~\cite{Kaminski:94:TCS, Neven:04:ACMCL, Ley:10:TR, DBLP:conf/mfcs/BalachanderFGT25}. We adopt the definition of \ra{s} from~\cite{Ley:10:TR} in this paper.

\begin{definition}\label{def:RA}
Given $k \in \mathbb{N}$, an alphabet $\Sigma$, and a binary relation $R$ on $\Sigma$, 
a $k$-register automaton ($k$-\ra) $\A$ over $(\Sigma,R)$ 
is a tuple $(Q, q_0, F, \Delta)$~where:
$Q$ is a set of states, partitioned into disjoint subsets $Q_0, \dots, Q_k$,
$q_0 \in Q_0$ is the \emph{initial state},
$F \subseteq Q$ is the set of \emph{final states}, and
$\Delta$ is a finite set of \emph{transitions} of the form $(p, \tau, E, q)$, 
where $p \in Q_i$ and $q \in Q_j$ for some $0 \le i, j \le k$, 
$\tau$ is a $\sim_R$-word type of length $i+1$, 
and $E \subseteq \{1, \ldots, i+1\}$.
\end{definition}

An \ra $\A$ is \emph{deterministic} (\dra) if, for any two transitions $(p, \tau, E, q)$ and $(p', \tau', E', q')$, we have $(E, q) = (E', q')$ whenever $(p, \tau) = (p', \tau')$.

A \emph{configuration} of $\A$ is a pair $(q, v)$, where $q \in Q_i$ and $v$ is a word over $\Sigma$ of length $i$, for some $0 \le i \le k$. 
We say a configuration $(q, v)$ is \emph{accepting} if $q \in F$.
For configurations $(p, u)$ and $(q, v)$ and a symbol $a \in \Sigma$, we write $(p, u) \xrightarrow{a}_\A (q, v)$ or $(p, u) \xrightarrow{\tau: E}_\A (q, v)$ if there exists a transition $\delta = (p, \tau, E, q)$ such that $u \cdot a = a_1 \dots a_n$ is of type $\tau$ and $v$ is obtained from $u \cdot a$ by removing all $a_i$ with $i \in E$, where (1) $E \neq \emptyset$ if $|u| = k$, preventing memory overflow, and
(2) $j \in E$ whenever $1 \le j < n$ and $a_j = a_n (=a)$, ensuring no data duplication.
The subscript $\A$ is omitted when clear from context.

Let $w = b_1 \dots b_m \in \Sigma^*$. A sequence of configurations $\pi = (p_0, u_0) \dots (p_m, u_m)$ is a \emph{run} on $w$, written $(p_0, u_0) \xrightarrow{w} (p_m, u_m)$, if $(p_i, u_i) \xrightarrow{b_i} (p_{i+1}, u_{i+1})$ for all $0 \le i < m$. It is \emph{accepting} if $(p_0, u_0) = (q_0, \emptyword)$ and $(p_m, u_m) $ is accepting.
We say $\autA$ is \emph{complete} if every word $u \in \finwords$ has a run in $\autA$ from $(q_0, \epsilon)$. Any \dra can be made complete by introducing a \emph{sink state}.
The language recognized by $\autA$ is $\lang(\autA) = \{ w \in \Sigma^* \mid (q_0, \epsilon) \xrightarrow{w} (q_f, u), q_f \in F, u \in \Sigma^* \}$, which is a data language over $(\Sigma, R)$. Two \dra{s} $\autA$ and $\autA'$ are \emph{equivalent} if $\lang(\autA) = \lang(\autA')$.
Given a configuration $c$ of $\autA$, we use $\autA_c$ to denote the \ra obtained from $\autA$ such that all accepting runs start from configuration $c$. Note that if $\autA$ is a finite automaton, then a configuration is simply a state. 

A state $q$ of $\autA$ is \emph{useful} if there exist words $w, w'$ such that $(q_0, \epsilon) \xrightarrow{w}_\autA (q, u)$ and $(q, v) \xrightarrow{w'}_{\autA} (q_f, x)$ for some $u, v, x$ and an accepting state $q_f$. $\autA$ is \emph{trimmed} if all its states, excluding the sink state, are useful. In this paper, we primarily consider \dra{s} over dense or non-ordered domains (i.e., with $R$ being identity test). Since the reachability problem for these models is decidable \cite{Yen:14:Infinite, Ley:10:TR}, we assume all \dra{s} in the remainder of this paper are trimmed.


\begin{figure}[t]
    \centering
    \includegraphics[width=0.8\linewidth]{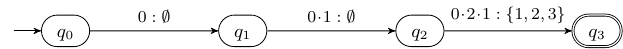}
    \caption{An example \dra $\autA$}
    \label{fig:DRA_mid}
\end{figure}

\begin{example}\label{ex:mid}
Let $L_{\textit{mid}} = \{a_1a_2a_3 \mid a_1 < a_3 < a_2\}$ over $(\mathbb{Q},<)$. The \dra in Fig.~\ref{fig:DRA_mid}, with initial state $q_0$ and accepting state $q_3$, recognizes $L_{\textit{mid}}$. Transitions are labeled by $\tau : E$. For instance, the transition from $q_2$ to $q_3$ is labeled $0\mdot 2\mdot 1 : \{1,2,3\}$.
On input $w = 3\cdot 9\cdot 7$, the automaton has the accepting run $(q_0,\emptyword) \xrightarrow{3} (q_1,3) \xrightarrow{9} (q_2,3\cdot 9) \xrightarrow{7} (q_3,\emptyword)$, showing that $w \in L_{\textit{mid}}$.
\qed
\end{example}

\begin{definition}
Let $\A$ be an \ra over $(\Sigma, R)$. We call $\A$ \emph{well-typed}  if for every two configuration transitions $(p, u) \xrightarrow{a}_{\A} (q, v)$ and  $(p', u') \xrightarrow{a'}_{\A} (q, v')$ ending in the same state $q$, we have $v$ and $ v'$ are of the same word type $\tau$. 
\end{definition}

Intuitively, an \ra is well-typed if the word type $\tau$ of its registers is uniquely determined by the current state $p$ and we say $p$ is of \emph{$\tau$-register-type}.

Let $L$ be a data language over the data domain $D$, and let $\mathcal{A}$ and $\mathcal{B}$ be two \ra{s} over $D$. We say that $\mathcal{A}$ and $\mathcal{B}$ are \emph{almost-equivalent} (resp., \emph{equivalent})
if the data languages $L(\mathcal{A})$ and $L(\mathcal{B})$ are almost-equal (resp., equal); furthermore, $\mathcal{A}$ \emph{almost-recognizes} $L$ if $L(\mathcal{A})$ is almost-equal to $L$.
Suppose that $\mathcal{A}$ has $n$ states and $k$ registers. We say that $\mathcal{A}$ is \emph{\fullhdm} (resp., \emph{data-minimal}) for $L$ if every \ra that almost-recognizes (resp., recognizes) $L$ has at least $k$ registers, and it is \emph{\fullhsm} (resp., \emph{state-minimal}) for $L$ if every \ra that almost-recognizes (resp., recognizes) $L$ has at least $n$ states. Finally, $\mathcal{A}$ is \emph{\fullhm} (resp., \emph{minimal}) if it is both \fullhdm and \fullhsm (resp., both data-minimal and state-minimal).


Given a directed graph $G$ with an initial node $v_0$, a node $v$ of $G$ is a \emph{preamble} if there are finitely many paths from $v_0$ to $v$, otherwise it is a \emph{kernel}. 
Intuitively, $v$ is a preamble if no path from $v_0$ to $v$ contains a loop.
A state of an automaton $\mathcal{A}$ is a \emph{preamble state} if it is a preamble node in the underlying graph with the initial state as the initial node. The \emph{kernel states} of $\mathcal{A}$ are defined analogously.


\section{Characteristics of Well-Typed \dra{s} 
}
\label{sec:dra_properties}

An important property of \dra{s} over dense or non-ordered domains is the decidability of the minimization problem \cite{Ley:10:TR}. The derivation of this result in \cite{Ley:10:TR} is analogous to that of DFA minimization, which is closely related to the Nerode congruence.

For regular languages, two words are Nerode-equivalent if no distinguishing extension exists. For data languages, however, equivalence additionally depends on \emph{memorability}~\cite{Ley:10:TR}, which captures the symbols that influence future acceptance. Intuitively, a symbol is $L$-memorable in a word $u$ if modifying it (while preserving type) can change the membership of some extension of $u$.
Consider the language $L_{\textit{mid}}$ in \Cref{ex:mid}. For any \dra{} $\mathcal{A}$ recognizing $L_{\textit{mid}}$, after reading $u = 2 \cdot 6$, the values $2$ and $6$ must be stored in registers to check whether the next input lies between them. Hence, both $2$ and $6$ are \emph{$L$-memorable} in $u$. 
The formal definition follows:

\begin{definition}\label{def:memorable}
Let $L$ be a language over $(\Sigma, R)$. A symbol $a \in \Sigma$ is \emph{$L$-memorable} in a word $u \in \Sigma^*$ if there is a symbol $b \in \Sigma$ and a word $w \in \Sigma^*$ such that
(1) $u\cdot w \sim_R (u\cdot w)[a/b]$; and
(2) $uw\in L \Leftrightarrow u\cdot w[a/b]\not\in L$.
\end{definition}
Intuitively, a letter $a$ in $u$ is memorable if it can distinguish between continuations $w$ and $w[a/b]$ such that $uw \sim_R (uw)[a/b]$, yet replacing $a$ with a nearby symbol $b$ in $w$ changes the acceptance of $uw$ versus $uw[a/b]$.

We write $mem_L(u)$ for the sequence of $L$-memorable symbols in $u$. 
With the notion of memorability, we can then define the Nerode congruence relation $\cong_L$ and the canonical automaton $\mathcal{C}_L$ for a data language $L$. For details, see \cite{Ley:10:TR} or \Cref{app:myhill_nerode}. As a result, we have the Myhill-Nerode theorem for \dra{s}:

\begin{restatable}{theorem}{thmMyhill}\cite{Ley:10:TR}\label{thm:myhill}
A data language $L$ is recognized by a \dra iff $\cong_L$ has finite index. 
Moreover, every \dra-recognizable language $L$ has a canonical automaton $\mathcal{C}_L$ which is the unique minimal well-typed \dra up to isomorphism\footnote{This claim does not hold when the well-typeness constraint is dropped.}.
\end{restatable}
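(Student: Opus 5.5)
The plan is to follow the classical Myhill--Nerode argument, using memorability to control the register contents. First I fix the definition of $\cong_L$ (as in \cite{Ley:10:TR}): $u \cong_L v$ iff there is a bijection $\sigma$ between $mem_L(u)$ and $mem_L(v)$, order-preserving in the dense ordered case and merely a bijection in the identity case, such that $mem_L(u)$ and $mem_L(v)$ have the same word type, and for every $w$, $uw \in L \Leftrightarrow v\,\sigma'(w) \in L$ whenever $\sigma'$ extends $\sigma$ to a type-preserving renaming of $w$. The equivalence classes are then counted up to this renaming.

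\textbf{(Recognizable $\Rightarrow$ finite index.)} Let $\A$ be a complete \dra with $k$ registers and state set $Q$. The first lemma I would prove is that for any run $(q_0,\emptyword)\xrightarrow{u}(q,x)$, every $L$-memorable symbol of $u$ occurs in $x$. The reason is that $\A$'s future behaviour depends only on $(q,x)$ and on the type of the remaining input relative to $x$. Replacing a symbol $a\notin x$ by $b$, under the hypothesis $uw\sim_R(uw)[a/b]$, yields a type-equivalent continuation from the same configuration, so membership cannot change. It follows that $u\cong_L v$ whenever $u,v$ reach configurations $(q,x),(q,y)$ with the same state and with $x,y$ of the same type, provided the induced renaming restricts correctly to the memorable symbols. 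The number of pairs (state, register type) is finite, since there are only finitely many word types of length at most $k$. Hence the index is finite.

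\textbf{(Finite index $\Rightarrow$ recognizable, and the canonical automaton.)} I will build $\mathcal{C}_L$ with one state per $\cong_L$-class $[u]$, whose registers hold $mem_L(u)$ in a fixed canonical order. On reading $a$, the new register content is $mem_L(ua)$. The key lemma here is $mem_L(ua)\subseteq mem_L(u)\cup\{a\}$, proved by unfolding the definition with $w:=aw'$. This lemma yields the erasure set $E$ and a transition type $\tau$ determined by the type of $mem_L(u)\cdot a$. Well-definedness, i.e.\ independence from the representative, follows from the congruence property of $\cong_L$ under renaming. The automaton is well-typed by construction, because the register type at $[u]$ is the type of $mem_L(u)$. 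A standard induction shows $\lang(\mathcal{C}_L)=L$. Finiteness of the register count comes from the bound $|mem_L(u)|\le k$ given by the first direction.

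\textbf{(Minimality and uniqueness.)} Given any well-typed \dra $\A$ for $L$, I will define a map from reachable states of $\A$ to $\cong_L$-classes, sending $q$ to the class of any $u$ reaching $q$. Well-typedness is exactly what makes this well defined. Two words reaching $q$ have register contents of the same type, and by the first lemma the memorable parts sit at type-equivalent positions, so the words are $\cong_L$-equivalent. The map is surjective, so $\A$ has at least as many states as $\mathcal{C}_L$. The register count of $\A$ is at least $\max_u|mem_L(u)|$, which is that of $\mathcal{C}_L$. If $\A$ is also minimal, the map is a bijection. One must then check that a minimal $\A$ stores exactly the memorable symbols, since otherwise registers could be dropped; this gives transition-preservation and hence isomorphism.

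I expect the main obstacle to be the interplay of renamings. Specifically, I must show that $\cong_L$ is a right congruence compatible with the canonical ordering of memorable symbols, and that well-typedness forces the register contents to align with $mem_L$ up to a consistent bijection. I would address this first by proving a \emph{shift lemma}: for dense or identity domains, any partial type-preserving bijection extends to a type-preserving renaming of arbitrary continuations. This lemma uses density to find intermediate values.
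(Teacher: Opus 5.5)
The paper gives no proof of this theorem; it cites it from \cite{Ley:10:TR}. Your outline follows the same Myhill--Nerode route as that source, but it has two genuine gaps.

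\textbf{First gap: non-memorable values are never shown to be irrelevant.} In the direction ``recognizable $\Rightarrow$ finite index'' you claim that $u\cong_L v$ whenever $u,v$ reach $(q,x),(q,y)$ with $x\sim_R y$. This does not follow from what you have.

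The automaton only gives $uw\in L\Leftrightarrow vw'\in L$ when $x\cdot w\sim_R y\cdot w'$. But $\cong_L$ quantifies over all $w,w'$ with $mem_L(u)\cdot w\sim_R mem_L(v)\cdot w'$. Such continuations may relate differently to the \emph{non-memorable} stored values. For example, $w$ may repeat a non-memorable register value $a$ while $w'$ uses a fresh letter; or, over $(\mathbb{Q},<)$, a letter of $w$ may lie on the other side of $a$. Your first lemma only says memorable letters are stored. Your shift lemma only renames continuations, and renaming can never change the type of $u\cdot w$.

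The missing ingredient is: if $mem_L(u)\cdot w\sim_R mem_L(u)\cdot w'$, then $uw\in L\Leftrightarrow uw'\in L$. This is exactly reflexivity of $\cong_L$, so without it $\cong_L$ is not even known to be an equivalence. It must be proved from non-memorability, one value at a time:
\begin{itemize}
\item For a non-memorable $a$ and a twin $b$ of $a$, the definition gives $uw\in L\Leftrightarrow u\,w[a/b]\in L$.
\item Applying this to $w[c/a]$ lets you move a letter $c$ of $w$ that is adjacent to $a$ onto $a$, provided $a$ does not occur in $w$.
\item Combined with renamings fixing $u$ (density supplies the twins), this pushes every letter of $w$ into a canonical position relative to the non-memorable letters of $u$.
\end{itemize}
You then rename $w'$ along $y\mapsto x$ and conclude.

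Relatedly, your proviso ``provided the induced renaming restricts correctly to the memorable symbols'' is never discharged. For counting, you can simply refine by the set of register indices holding memorable values. For your state-to-class map in the minimality part, however, you must show that this index set is determined by the state. It is: a memorability witness can be transported so that its twin lies closer to the stored value than any unstored letter of the other word, again by density.

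\textbf{Second gap: the uniqueness step fails as stated.} You argue that a minimal $\mathcal{A}$ stores exactly the memorable symbols, ``since otherwise registers could be dropped''. This fails when minimality is measured, as in this paper, by the total numbers of states and registers. Storing a superfluous value at a state whose canonical register type is shorter than the maximum costs nothing.

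Concretely, take $L=\{a_1a_2a_3\mid a_3=a_2\}$ over an infinite domain with equality. $\mathcal{C}_L$ stores nothing after $a_1$ and only $a_2$ after two letters. Now consider the DRA that instead stores $a_1$ after the first letter and overwrites it with $a_2$ on the second. It is well-typed, recognizes $L$, and has the same numbers of states and registers, yet it is not isomorphic to $\mathcal{C}_L$.

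So uniqueness must be argued for a per-state notion of register minimality: every reachable configuration after $u$ stores at most $|mem_L(u)|$ values. With that notion, your bijection together with the fact that registers then hold exactly $mem_L(u)$, in order of last occurrence, does yield the isomorphism.

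\textbf{A smaller point.} In the converse direction you bound $|mem_L(u)|$ ``by the first direction'', which is circular. Condition (1) of $\cong_L$ forces equal lengths of $mem_L$ within a class, so finite index already gives the bound.
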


The well-typedness condition in \Cref{thm:myhill} is necessary and sufficient for the existence and uniqueness of the minimal \dra{}. Without it, a trade-off between the number of states and registers may arise (see \Cref{app:ex_trade-off}). In the remainder of this paper, we assume all \dra{s} are well-typed unless stated otherwise.

Lastly, we present a pumping property for \dra{s} that is employed in our hyper-minimization algorithms.

For a finite automaton $\autA$ with $n$ states, any run $\pi$ of length greater than $n$ contains a non-empty subrun that can be pumped or depumped. This is the classical Pumping Lemma~\cite{Hopcroft:07:text}, which can be extended to multiple runs. In particular, given two runs $\pi_1$ and $\pi_2$ of $\autA$ over the same word $w$ with $|w| > n^2$, there exist non-empty subruns of $\pi_1$ and $\pi_2$ that can be pumped or depumped simultaneously. We establish a similar property for \dra{s}:

\begin{restatable}{lemma}{LemPumpTwoRuns}\label{lem:pump_two_runs}

    Let $\autA$ be a \dra{} over $(\Sigma,R)$ with $n$ states and $k$ registers, and let $\lenTA = n^2 \cdot (2k)!$. Given a word $w$ and two runs $\pi$ and $\rho$ from configurations $(p,u)$ and $(q,v)$ over $w$, ending in states $r$ and $s$, respectively, there exists a word $w'$ with $|w'| \leq \lenTA$ and corresponding runs $\pi'$ and $\rho'$ from $(p,u)$ and $(q,v)$ that also end in $r$ and $s$.
    Moreover, if $|w| > \lenTA$, then for every $\ell \in \mathbb{N}$, there exists a word $w''$ with $|w''| > \ell$ and corresponding runs $\pi''$ and $\rho''$ from $(p,u)$ and $(q,v)$ that end in $r$ and $s$.
\end{restatable}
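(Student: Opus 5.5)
We run a pigeonhole argument on abstracted joint configurations of the two runs, the same way the classical two-run pumping lemma is proved. The one new ingredient is the registers: we record only the relative order (or equality) pattern of the combined register contents, not the actual data values.

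\textbf{Abstraction.} For a position $t$ of $w$, let $(p_t,u_t)$ and $(q_t,v_t)$ be the configurations reached by $\pi$ and $\rho$ after reading the prefix of length $t$. The \emph{joint abstract configuration} at $t$ is the triple $(p_t,q_t,\theta_t)$, where $\theta_t$ is the $\sim_R$-word type of the concatenation $u_t\cdot v_t$, of length at most $2k$.

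Since each register content has no duplicates, $\theta_t$ is determined by how the at most $2k$ values of $u_t$ and $v_t$ interleave. Over a dense order this is an order-type of at most $2k$ elements with ties allowed only between a value of $u_t$ and one of $v_t$; over an unordered domain it is a partial matching between the two register vectors. Once the states $p_t,q_t$ are fixed, well-typedness fixes the internal types of $u_t$ and $v_t$. We then bound the number of possible $\theta_t$ by $(2k)!$, so there are at most $n^2\cdot(2k)!=\lenTA$ joint abstract configurations. This counting needs some care, since ties could give a bit more than $(2k)!$ patterns. If so, we instead take a bound of the same flavor, which is all the later steps use.

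\textbf{Key invariant.} If two joint configurations have the same abstract configuration, then there is a partial bijection $h$ on data values, preserving $R$, that maps $u_t\cdot v_t$ onto $u_{t'}\cdot v_{t'}$. In a dense order, or with $R$ the identity, this $h$ extends to an automorphism of $(\Sigma,R)$. This is the standard homogeneity of $(\mathbb{Q},<)$ and of pure sets, and it is where we use the restriction to these domains. Applying the automorphism to any continuation $x$ read from time $t$ yields a word $h(x)$ on which both runs from time $t'$ behave the same way, i.e., pass through the same states. This holds because transitions depend only on the type of $u\cdot a$ together with $E$, and determinism makes the resulting transitions unique.

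\textbf{First claim (shortening).} Suppose $|w|>\lenTA$. Among the $|w|+1$ positions, two, say $t<t'$, carry the same abstract configuration. Cut out the segment of $w$ between $t$ and $t'$, and replace the suffix after $t'$ by its image under the automorphism mapping the configurations at $t'$ back to those at $t$. The new word is shorter, and both runs still reach $r$ and $s$. Note that we use the runs only through states and types; acceptance is never needed. Iterating until the length is at most $\lenTA$ gives $w'$.

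\textbf{Second claim (pumping).} With $t<t'$ as above, let $x$ be the segment read between $t$ and $t'$. Let $g$ be the automorphism mapping the configurations at $t$ to those at $t'$. Reading $g(x)$ from time $t'$ returns both runs to the same abstract configuration as at $t$ and $t'$. Repeating this, we insert $x\,g(x)\,g^2(x)\cdots$ as many times as needed, then append the suffix transported by the appropriate composed automorphism. This produces words of every length exceeding any given $\ell$, with both runs still ending in $r$ and $s$.

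\textbf{Main obstacle.} The delicate step is the invariant: showing that equal abstract configurations give an automorphism that transports the runs, and that this respects the side conditions on $E$ (no overflow, no duplication). The no-duplication condition is preserved because an automorphism preserves equalities. The overflow condition depends only on $|u|$, which is fixed by the state. The rest is pigeonhole.
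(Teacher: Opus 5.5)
Your argument is the paper's argument: pigeonhole on the triple (state of $\pi$, state of $\rho$, $\sim_R$-type of the joint register word), transport along an order-preserving bijection, delete the repeated segment to shorten, and iterate the bijection to pump. Your handling of the transported suffix is the correct one. The paper appends $\sigma^i(\mu)$ after $\omega_0\cdots\omega_{i-1}$, but $\mu$ was read from $(c,\alpha_1)$, so what is needed is $\sigma^{i-1}(\mu)$. In particular the depumping case $i=0$ needs $\sigma^{-1}(\mu)$, not $\mu$, and that is exactly what your ``map the configurations at $t'$ back to those at $t$'' produces.

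The one step you leave open is the count. It cannot be waved through as ``a bound of the same flavor'', because $\lenTA$ is part of the statement and is reused verbatim later (Corollary~\ref{lem:almost-eq_with_lenTA_bound}, $\lenTA$-memorability). Here is how it comes out.

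\emph{Dense order.} Fix a pair of states with $a,b\le k$ registers. Well-typedness fixes the types of $u_t$ and $v_t$, so $\theta_t$ is a merge of two chains in which an element of one chain may tie with an element of the other. There are $D(a,b)=\sum_d\binom{a}{d}\binom{b}{d}2^d\le 2^k\binom{2k}{k}$ such merges (Delannoy numbers). Moreover $2^k\binom{2k}{k}\le (k!)^2\binom{2k}{k}=(2k)!$ as soon as $2^k\le (k!)^2$, which holds for every $k\neq 1$.

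\emph{Identity relation.} Here $\theta_t$ is a partial matching, and $\sum_d\binom{a}{d}\binom{b}{d}d!\le k!\binom{2k}{k}\le (2k)!$ for all $k$.

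\emph{The case $k=1$ over a dense order.} Two registers $x,y$ admit the three patterns $x<y$, $x=y$, $x>y$, against $2!=2$. With $n-1$ one-register states (the initial state has none) there are $n^2+2(n-1)^2$ joint abstract configurations. This exceeds $2n^2+1$ once $n\ge 4$, so pigeonhole on a word of length $2n^2+1$ need not produce a repeat.

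The paper's proof simply says ``by the pigeonhole principle'' and has exactly this hole. The clean fix is to set $\lenTA=n^2\cdot\max\{3,(2k)!\}$, which bounds both counts for every $k$. With that constant your proof goes through as written.
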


\section{Hyper-Minimization}\label{sec:hyper-minimization}

Before introducing our hyper-minimization algorithm for \dra{s}, we briefly recall hyper-minimization for DFAs, which are a special class of \dra{s} over a finite alphabet and without registers.

We begin with the \emph{merge} operation from DFA hyper-minimization. For a DFA $\autA$, a state $p$ can be merged into a distinct state $q$ if:
\begin{itemize}
\item $p$ is a \emph{preamble} state, i.e., no loops in any path from the initial state to $p$;
\item $p$ and $q$ are \emph{almost-equivalent}, i.e., $\lang(\autA_p)\ominus\lang(\autA_q)$ is finite.
\end{itemize}

We denote by $\merge(p \rightarrow q, \autA)$ the DFA obtained by performing the merge operation, i.e., redirecting all incoming transitions of $p$ to $q$.
Let $\autB = \merge(p \rightarrow q, \autA)$.
It is not difficult to see that $\lang(\autA)$ is almost-equivalent to $\lang(\autB)$. In particular, $\lang(\autA) \symdiff \lang(\autB) = \{u\cdot w \mid q_0 \xrightarrow{u} p, w \in \lang(\autA_p)\symdiff \lang(\autA_q)\}$.
This symmetric difference set is finite since $p$ is a preamble state—hence only finitely many prefixes $u$ lead to $p$—and $\lang(\autA_p)\symdiff \lang(\autA_q)$ is finite by almost-equivalence definition.
By recursively applying the merge operation, we eventually obtain a minimal DFA that is almost-equivalent to $\autA$.

The basic DFA hyper-minimization algorithm (cf. \Cref{algo:dfa-hyper-minimization} in \Cref{app:dfa-hypermin}) proceeds as follows: it first minimizes the input DFA and then recursively merges a preamble state to its almost-equivalent state, where $p \approx q$ denotes that $\lang(\autA_p)$ and $\lang(\autA_q)$ are almost-equivalent.

The most interesting part, which is also missed so far, is the computation of the almost-equivalence relation on the states in a DFA $\autA$.
Existing algorithms compute the almost-equivalence relation based on the following result:
\begin{lemma}[\cite{DBLP:journals/ita/BadrGS09}]\label{lem:dfa-word-bound}
    Let $\autA$ be a minimal DFA with $n$ states. Then, two distinct states $p, q$ are almost-equivalent, i.e., $p \approx q$ if and only if there is an integer $\ell > 0$ such that $\autA_p(w) = \autA_q(w)$ for all $w \in \finwords$ with $|w| \geq \ell$, where $\A_r(w)$ is the state reached over $w$ from the initial state for $r=p,q$.
\end{lemma}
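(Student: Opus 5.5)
We prove the two directions separately. The argument rests on two facts: a finite language is the same as a language containing no words of length at least some bound, and in a minimal DFA distinct states have distinct languages.

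\emph{($\Leftarrow$).} Suppose there is $\ell>0$ such that $\autA_p(w)=\autA_q(w)$ for all $w$ with $|w|\ge \ell$.
\begin{itemize}
\item Take any such $w$. Since $p$ and $q$ reach the same state on $w$, we have $w\in\lang(\autA_p)$ iff $w\in\lang(\autA_q)$.
\item Hence every word of $\lang(\autA_p)\symdiff\lang(\autA_q)$ has length less than $\ell$.
\item The alphabet is finite, so there are only finitely many such words, and the symmetric difference is finite. Thus $p\approx q$.
\end{itemize}

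\emph{($\Rightarrow$).} Suppose $\lang(\autA_p)\symdiff\lang(\autA_q)$ is finite, and let $m$ bound the lengths of its words. Set $\ell=m+1$.
\begin{itemize}
\item Take any $w$ with $|w|\ge \ell$, and let $p'=\autA_p(w)$ and $q'=\autA_q(w)$. We may assume $\autA$ is complete, or else treat an undefined transition as going to a sink state.
\item Suppose $p'\neq q'$. Since $\autA$ is minimal, distinct states recognize distinct languages. So there is a word $z$ with $z\in\lang(\autA_{p'})\symdiff\lang(\autA_{q'})$.
\item Then $wz\in\lang(\autA_p)\symdiff\lang(\autA_q)$, because by determinism $\autA_p(wz)=\autA_{p'}(z)$ and likewise for $q$.
\item But $|wz|\ge \ell>m$, which contradicts the choice of $m$. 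Hence $p'=q'$.
\end{itemize}

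The main care point is the ($\Rightarrow$) direction. It must quantify over \emph{all} long words $w$, including those where the runs from $p$ and $q$ might leave the trimmed part of $\autA$. If $\autA$ is trimmed and not complete, both runs have to be sent to a common sink. Minimality then identifies any two states whose languages are both empty with that sink.

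If a uniform bound is also wanted, the pumping argument behind \Cref{lem:pump_two_runs} (with $k=0$) gives one. Consider a pair of distinct states reached from $(p,q)$ that are distinguishable by some $z$. If such a pair is reachable by a word longer than $n^2$, the pair sequence can be pumped to produce arbitrarily long distinguishing words. That would contradict finiteness, so one can take $\ell=n^2$. This refinement is optional, since the lemma only asks for the existence of some $\ell$.
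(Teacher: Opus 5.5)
Your proof is correct, but your ($\Rightarrow$) direction takes a more direct route than the paper's. The paper argues by contradiction with pumping. If long words (in fact a single word of length at least $n^2$) led $p$ and $q$ to distinct states $p'\neq q'$, minimality gives a suffix $w'$ separating $p'$ and $q'$. Pumping a repeated state pair along the two runs then yields infinitely many words $xu^iyw'$ in $\lang(\autA_p)\symdiff\lang(\autA_q)$.

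You instead take $\ell$ to be one more than the length of the longest word in the finite symmetric difference. A single extension $wz$ is then already too long to lie in it, so no pumping is needed. Your argument is shorter and self-contained, but your $\ell$ depends on the symmetric difference, which is not known in advance. The paper's pumping argument is what yields the uniform bound $\ell\le n^2$, and the brute-force decision procedure (enumerate words up to length $n^2$) depends on that bound.

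You do sketch this as an optional refinement, with one small off-by-one. Ruling out only words longer than $n^2$ literally gives $\ell=n^2+1$. However, a word of length exactly $n^2$ already passes through $n^2+1$ state pairs, so the pigeonhole applies there too and $\ell=n^2$ is justified. Your ($\Leftarrow$) direction and your treatment of partial DFAs via a common sink are fine.
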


Let $n$ be the number of states in $\A$. 
We can restrict $\ell$ in \Cref{lem:dfa-word-bound} to be at most $n^2$. 
To see this, suppose that for every $k \ge n^2$ there exists a word $w$ with $|w| \geq k$ such that $\A_p(w) \neq \A_q(w)$; that is, $p' = \A_p(w) \neq  \A_q(w) = q'$.
Since $\A$ is minimal, this implies that $\lang(\A_{p'}) \neq \lang(\A_{q'})$.
Hence, there exists a word $w' \in \lang(\A_{p'}) \symdiff \lang(\A_{q'})$.
Because $|w| \geq n^2$, the word $w$ can be decomposed as $w = xuy$ such that $(\A_p(x), \A_q(x)) = (\A_p(xu), \A_q(xu))$.
It then follows that $xu^i y w' \in \lang(\A_p) \symdiff
\lang(\A_q)$ for all $ i\geq 1$, yielding infinitely many distinguishing words.
This contradicts the assumption that $p \approx q$, so the lemma follows.

Moreover, this lemma yields a brute-force algorithm for checking whether two states $p$ and $q$ are almost equivalent: enumerate all words of length up to $n^2$. 
By \Cref{lem:dfa-word-bound}, we have $p \approx q$ if and only if $p$ and $q$ differ only on words of length smaller than $n^2$.
We briefly mention this algorithm for comparison with the almost-equivalence computation for DRAs. 
In fact, more efficient algorithms exist for computing the almost-equivalence relation over all states of DFAs, running in $\bigO(n^2)$~\cite{Badr:08:CIAA} and even $\bigO(n \log n)$~\cite{Holzer:10:CIAA}.

We fix a \textbf{well-typed} \dra $\A = (Q, q_0, F, \Delta)$ with $n$ states and $k$ registers, and we want to hyper-minimize it in the remainder of the paper.

Our hyper-minimization algorithm for DRAs follows the general approach used for DFAs. However, extending hyper-minimization to DRAs is considerably more involved, as identifying suitable analogues of the key notions is non-trivial. 
We identify the two main challenges as follows, and address them in Sections~\ref{ssec:almost-eq-relation-states} and~\ref{sec:hyper-minimization_process}, respectively.
\begin{itemize}
    \item[C1] How to define and compute the almost-equivalence relation over states given that a run on a word of a \dra is not a sequence of states but configurations?

    \item[C2] How to design and apply merge operations, and ensure that the final resulting \dra remains both hyper-data-minimal and hyper-state-minimal?
\end{itemize}


\subsection{Almost-Equivalence Relation over States}
\label{ssec:almost-eq-relation-states}

In this section, we introduce the almost-equivalence relation over states and then provide the means to compute them over dense domains or non-ordered domains.
Firstly, we define the \emph{almost-equivalent} relation over configurations.

\begin{definition}\label{def:dra-almost-eq-configurations}
    Let $(p, u)$ and $(q, v)$ be two configurations of $\A$. We say they are \emph{almost-equivalent}, denoted $(p, u) \approx (q, v)$, if there exist only finitely many word types $w \in \finwords$ s.t. $(p, u) \xrightarrow{w} (p', u')$ and $(q, v) \xrightarrow{w} (q', v')$ with $p' \in F \not\Leftrightarrow q' \in F$.
\end{definition}

Now, we are ready to define the DRA analogue of the almost-equivalence relation over states.

\begin{definition}\label{def:dra-almost-eq-states}
Let $p$ and $q$ be two states in $\A$.
$p$ and $q$ are \emph{almost-equivalent}, denoted by $p \approx q$, if for each register word $u$ of $p$, there exists a register word $v$ of $q$ such that $(p,u) \approx (q,v)$, and vice versa.
\end{definition}

Two configurations $(p, u)$ and $(q, v)$ are \emph{equivalent}, denoted by $(p, u) \equiv (q, v)$, if $\lang(\autA_{(p, u)}) = \lang(\autA_{(q, v)})$. This equivalence relation $\equiv$ can be extended to states, analogously to almost-equivalence (cf. \Cref{def:dra-almost-eq-states}).
Moreover, both almost-equivalence and equivalence relations naturally extend to configurations or states of \emph{different} \dra{s} over the same data domain $(\Sigma,R)$.
For example, for configurations $(p,u)$ of $\autA$ and $(q,v)$ of $\autB$, we have $(p,u) \approx (q,v)$ if there are finitely many word types $w \in \finwords$ such that $(p,u) \xrightarrow{w} (p',u')$ and $(q,v) \xrightarrow{w} (q',v')$ with $p' \in F_\autA \not\Leftrightarrow q' \in F_\autB$, where $F_\autA$ and $F_\autB$ denote the sets of accepting states of $\autA$ and $\autB$, respectively. All properties of almost-equivalence extend to this setting.

To compute the almost-equivalence relation over states, we develop the following DRA analogue of Lemma~\ref{lem:dfa-word-bound}.
\begin{restatable}{lemma}{lemDraWordBound}\label{lem:dra-word-bound}
Let $(p,u),(q,v)$ be two configurations of a \dra $\autA$. $(p,u) \approx (q,v)$, iff there exists $\ell \in \naturals$ s.t. for all $w \in \finwords$ with $|w| > \ell$, the configurations reached from $(p,u)$ and $(q,v)$ after reading $w$ either both accepting or both rejecting. 
\end{restatable}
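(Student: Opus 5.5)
The plan is to observe that the lemma reduces to two elementary facts about word types: a word type determines the length of its words, and there are only finitely many word types of any bounded length. No pumping argument such as \Cref{lem:pump_two_runs} should be needed here; that lemma matters only later, for bounding $\ell$ effectively.

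Throughout, I assume $\autA$ is complete, adding a sink state if necessary. Since $\autA$ is deterministic, for every $w$ there are then unique configurations $(p',u')$ and $(q',v')$ with $(p,u)\xrightarrow{w}(p',u')$ and $(q,v)\xrightarrow{w}(q',v')$. Call $w$ \emph{distinguishing} if exactly one of $p',q'$ lies in $F$. Let $D$ be the set of distinguishing words. By \Cref{def:dra-almost-eq-configurations}, $(p,u)\approx(q,v)$ means that the set $T_D=\{[w]_{\sim_R} \mid w\in D\}$ is finite. The lemma's condition says that $D$ contains no word of length greater than $\ell$.

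For ($\Rightarrow$), assume $T_D$ is finite. By the definition of $\sim_R$, all words of one type have the same length, so every $\tau\in T_D$ has a well-defined length $|\tau|$. I set $\ell=\max\{|\tau| \mid \tau\in T_D\}$, or $\ell=0$ if $T_D=\emptyset$. Now take any $w$ with $|w|>\ell$. Its type has length $|w|>\ell$, so the type is not in $T_D$, and hence $w\notin D$. That is exactly the statement that the configurations reached on $w$ are both accepting or both rejecting.

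For ($\Leftarrow$), assume such an $\ell$ exists. Then every $w\in D$ has $|w|\le \ell$. For each fixed length $m$, a $\sim_R$-type of length $m$ is determined by the set of pairs $(i,j)\in\{1,\dots,m\}^2$ with $(a_i,a_j)\in R$. So there are at most $2^{m^2}$ types of length $m$. It follows that $|T_D|\le\sum_{m=0}^{\ell}2^{m^2}<\infty$, which gives $(p,u)\approx(q,v)$.

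The only step I expect to need care is the setup rather than either implication. I must make sure the reached configurations are well defined, which uses determinism and completeness via the sink state, so that a missing run does not silently count as agreement. I must also make sure that "finitely many word types" in \Cref{def:dra-almost-eq-configurations} is read as finiteness of the set of $\sim_R$-classes of distinguishing words. With those conventions fixed, both directions are immediate counting arguments.
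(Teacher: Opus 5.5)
Your proof is correct and follows the paper's argument. For ($\Rightarrow$) you take $\ell$ to be the maximum length of the finitely many distinguishing word types, and for ($\Leftarrow$) you use that only finitely many word types have length at most $\ell$; the explicit $2^{m^2}$ count, the $\ell=0$ case when there are no distinguishing types, and the completeness convention are slightly more careful versions of steps the paper leaves implicit.
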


Using \Cref{lem:dra-word-bound}, one can prove that the successors of almost-equivalent configurations over a given letter are also almost-equivalent.
\begin{restatable}{lemma}{lemAlmostEqToAlmostEq}\label{lem:almost-eq_to_almost-eq}
    Let $\autA$ be a  \dra over $(\Sigma,R)$, and let $c$ and $d$ be two configurations with $c \approx d$. 
    For every $a \in \Sigma$, if $c \xrightarrow{a} f$ and $d \xrightarrow{a} g$, then $f \approx g$.
\end{restatable}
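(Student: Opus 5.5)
We derive the lemma directly from \Cref{lem:dra-word-bound}, using determinism of $\autA$. Fix $c=(p,u)$, $d=(q,v)$ with $c \approx d$, a letter $a \in \Sigma$, and the successors $c \xrightarrow{a} f$ and $d \xrightarrow{a} g$. By the forward direction of \Cref{lem:dra-word-bound}, there is $\ell \in \naturals$ such that for every $w$ with $|w| > \ell$, the configurations reached from $c$ and $d$ after reading $w$ are either both accepting or both rejecting. Our goal is to exhibit a threshold $\ell'$ with the same property for $f$ and $g$, and then apply the backward direction of \Cref{lem:dra-word-bound} to conclude $f \approx g$.

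We take $\ell' = \ell$. Let $w'$ be any word with $|w'| > \ell$, and suppose $f \xrightarrow{w'} f'$ and $g \xrightarrow{w'} g'$. Then $c \xrightarrow{a\cdot w'} f'$ and $d \xrightarrow{a \cdot w'} g'$ by concatenating runs. Since $\autA$ is deterministic, the run of $c$ on $a\cdot w'$ is unique: $c$ reaches $f$ after $a$, and the continuation from $f$ is exactly the unique run on $w'$. So $f'$ and $g'$ are precisely the configurations reached from $c$ and $d$ on $a\cdot w'$. As $|a\cdot w'| = |w'|+1 > \ell$, the choice of $\ell$ gives that $f'$ and $g'$ are both accepting or both rejecting. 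Hence the hypothesis of \Cref{lem:dra-word-bound} holds for $(f,g)$ with bound $\ell$, and $f \approx g$. One could instead argue straight from \Cref{def:dra-almost-eq-configurations}: every distinguishing word $w'$ for $(f,g)$ yields the distinguishing word $a\cdot w'$ for $(c,d)$. Distinct word types of $w'$ stay distinct after prefixing with $a$, since the word types of the original words are recovered by deleting the first position. This injection shows the set of distinguishing types for $(f,g)$ is finite.

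The only delicate point is runs that do not exist. If some $w'$ has no run from $f$ but one from $g$, we need a convention for what "reached" means. We handle this by assuming $\autA$ is complete, which the paper permits by adding a sink state. Adding the sink preserves determinism and does not alter which configurations are accepting, so the argument above goes through unchanged.
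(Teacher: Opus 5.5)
Your proof is correct and is essentially the paper's argument. The paper argues by contraposition: a word $w$ longer than the bound $\ell$ of \Cref{lem:dra-word-bound} that distinguishes $f$ and $g$ gives the word $a\cdot w$ distinguishing $c$ and $d$, a contradiction. You run the same prefix-by-$a$ step directly with $\ell'=\ell$, and your extra remarks on determinism, completeness via a sink state, and the word-type injection are sound details the paper leaves implicit.
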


Further, by combining \Cref{lem:pump_two_runs} and \Cref{lem:dra-word-bound}, we immediately derive:
\begin{corollary}\label{lem:almost-eq_with_lenTA_bound}
Let $\autA$ be a  \dra and $(p,u),(q,v)$ be two configurations of $\autA$. $(p,u) \approx (q,v)$, if and only if for all $w \in \Sigma^{> \lenTA}$, the configurations reached from $(p,u)$ and $(q,v)$ after reading $w$ either both accepting or both rejecting.
\end{corollary}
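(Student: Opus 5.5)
The statement to prove is the corollary: $(p,u)\approx(q,v)$ iff every word longer than $\lenTA$ leads the two configurations to agree on acceptance. I would obtain it from \Cref{lem:dra-word-bound} by showing that the existential threshold $\ell$ there can be replaced by the uniform threshold $\lenTA$. This replacement is supplied by the second part of \Cref{lem:pump_two_runs}.

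The backward direction is immediate. If all words $w$ with $|w|>\lenTA$ make the reached configurations agree on acceptance, then taking $\ell=\lenTA$ in \Cref{lem:dra-word-bound} gives $(p,u)\approx(q,v)$.

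For the forward direction I would argue by contraposition. Assume there is a word $w$ with $|w|>\lenTA$ such that $(p,u)\xrightarrow{w}(r,u')$ and $(q,v)\xrightarrow{w}(s,v')$ with $r\in F\not\Leftrightarrow s\in F$. Since $\autA$ is deterministic (and assumed complete), these are the unique runs $\pi,\rho$ on $w$, ending in states $r$ and $s$. Now apply the second part of \Cref{lem:pump_two_runs}. For every $\ell\in\naturals$ it yields a word $w''$ with $|w''|>\ell$ and runs from $(p,u)$ and $(q,v)$ on $w''$ that end in $r$ and $s$ respectively. Because acceptance of a configuration depends only on its state, the configurations reached on $w''$ disagree on acceptance. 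So for every $\ell$ there is a word longer than $\ell$ on which the two configurations disagree. By \Cref{lem:dra-word-bound} this means $(p,u)\not\approx(q,v)$, which contradicts the assumption.

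The main point needing care is checking that \Cref{lem:pump_two_runs} really preserves the end states $r$ and $s$, not just the existence of runs; its statement guarantees this, so the argument goes through. A secondary point is that the pumped runs must be the runs of $\autA$ on $w''$ from the given configurations. Determinism gives this: runs from a fixed configuration on a fixed word are unique, so the runs the lemma provides are exactly the relevant ones. If the automaton is not complete, missing runs are treated as rejecting via the sink state; this is consistent, since completion only adds a non-accepting sink.
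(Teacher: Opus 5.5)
Your proof is correct and is exactly the argument the paper intends. The paper states this corollary as an immediate combination of \Cref{lem:dra-word-bound} (taking $\ell=\lenTA$ for the backward direction) and the pumping part of \Cref{lem:pump_two_runs} (turning one long disagreeing word into arbitrarily long ones for the forward direction); your remarks on determinism and on the pumped runs keeping the end states $r,s$ just make explicit what the paper leaves implicit.
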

Note that if $(p,u)$ and $(q,v)$ in the lemma are from different \dra{s} $\autA$ and $\autB$, where $\autA$ has $m$ states and $k$ registers and $\autB$ has $n$ states and $\ell$ registers, then the bound $\lenTA$ should be replaced with $m \cdot n \cdot (k + \ell)!$.

The equivalence problem for \dra{s} is decidable because it can be reduced to the emptiness problem using the product automaton technique; emptiness is itself decidable for \dra{s} \cite{Kaminski:94:TCS,Ley:10:TR}. Note that if an automaton recognizes the empty language, its minimal form consists of a single rejecting state. The minimization problem for \dra{s} is also decidable \cite{Ley:10:TR}.
So by \Cref{lem:almost-eq_with_lenTA_bound}, the configuration almost-equivalence problem is also decidable if we can compute \dra{s} recognizing $\lang(\autA_{(p,u)}) \cap \Sigma^{\geq \ell}$ and $\lang(\autA_{(q,v)}) \cap \Sigma^{\geq \ell}$. 
Let $\A_{\ell}$ be a DFA that recognizes the language $\Sigma^{\geq \ell}$.
Consequently, the product automaton of $\autA_{(p,u)}$ and $\autA_\ell$ is a \dra that recognizes $\lang(\autA_{(p,u)}) \cap \Sigma^{\geq \ell}$. Analogously, the product automaton of $\autA_{(q,v)}$ and $\autA_\ell$ is a \dra recognizing $\lang(\autA_{(q,v)}) \cap \Sigma^{\geq \ell}$. As a result, we have:

\begin{lemma}\label{lem:decidability_config_almost-eq}
    Let $\autA$ be a \dra, and let $c, d$ be two configurations of $\autA$. It is decidable to determine whether $c \approx d$.
\end{lemma}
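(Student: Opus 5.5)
By \Cref{lem:almost-eq_with_lenTA_bound}, deciding $c \approx d$ for $c=(p,u)$ and $d=(q,v)$ reduces to a single test. We must check whether the configurations reached from $c$ and $d$ agree on acceptance for every word $w$ with $|w| > \lenTA$. Set $\ell = \lenTA + 1 = n^2\cdot(2k)! + 1$, which is computable from $\autA$. Then $c \approx d$ holds iff
\[
\lang(\autA_{c}) \cap \Sigma^{\geq \ell} \;=\; \lang(\autA_{d}) \cap \Sigma^{\geq \ell}.
\]
Here we use that $\autA$ is deterministic and can be assumed complete by adding a sink. Then each $w$ has exactly one run from each configuration, and "both accepting or both rejecting" is exactly membership agreement.

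The plan is to build DRAs for both sides and invoke decidability of DRA equivalence. First, $\autA_c$ must be realized as an honest DRA starting from an empty register word. To do this, add a fresh initial state that "guesses" nothing: we cannot preload $u$ into registers, since transitions only compare with stored data. Instead, treat the concrete values of $u$ and $v$ as constants. Equivalently, prepend the prefix: decide whether $\lang(\autA_c)\cap\Sigma^{\geq \ell}$ equals the corresponding set for $d$ by working over the product structure with a finite set of constants. Concretely, I would form the product DRA $\autA\times\autA$ started in the pair configuration $((p,q),u\cdot v)$, with registers of both copies, synchronized with the DFA $\autA_\ell$ counting up to $\ell$. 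Its accepting condition is "the counter has reached $\ell$ and exactly one component is in $F$". Then $c\approx d$ iff this product accepts no word from its start configuration.

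The remaining step is deciding emptiness from a non-initial configuration, which I expect to be the main (but mild) obstacle. By data-language invariance, only the word type of $u\cdot v$ matters, since membership depends only on the types of $u\cdot v\cdot w$. So emptiness from $((p,q),u v)$ can be decided in either of two ways. One is to prepend a finite gadget of states that reads a word of the type of $u\cdot v$ and stores it, reaching the configuration $((p,q), u' v')$ with $u'v' \sim_R uv$. Emptiness of the resulting DRA from its initial configuration is decidable \cite{Kaminski:94:TCS,Ley:10:TR}. Alternatively, one can appeal directly to the decidable reachability of accepting states used for trimming \cite{Yen:14:Infinite,Ley:10:TR}. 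The care needed is in the gadget: repeated values in $u\cdot v$ shared between copies must be respected, which is why we store the type of the concatenation and not of $u$ and $v$ separately. Since $u$ and $v$ are finite, and the product of DRAs with a DFA remains a DRA, the whole procedure is effective, which establishes the lemma.
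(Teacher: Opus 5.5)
Your proposal is correct and takes essentially the paper's route. Corollary~\ref{lem:almost-eq_with_lenTA_bound} reduces $c \approx d$ to equality of $\lang(\autA_c)\cap\Sigma^{\geq \ell}$ and $\lang(\autA_d)\cap\Sigma^{\geq \ell}$, which is realized by a product with a DFA for $\Sigma^{\geq \ell}$ and decided through a product construction plus decidable DRA emptiness; your explicit handling of the non-initial start configuration (only the type of $u\cdot v$ matters, so a prefix-reading gadget reduces to ordinary emptiness) fills in a step the paper leaves implicit.
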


By Definition~\ref{def:dra-almost-eq-states}, when checking whether two states are almost-equivalent, it seems that we need to enumerate all possible register words of the two states (which are infinite), and check the almost-equivalence relation over configurations using Lemma~\ref{lem:dra-word-bound}.
Nevertheless, with Lemma~\ref{lem:almost-eq-configuration_imp_almost-eq-states}, we only need to consider one configuration for a state.

\begin{restatable}{lemma}{lemAlmostEqConfigurationImpAlmostEqStates}\label{lem:almost-eq-configuration_imp_almost-eq-states}
    Let $\autA$ be a \dra, and let $(p,u)$ and $(q,v)$ be two configurations of $\autA$. If $(p, u) \approx (q, v)$ (resp., $(p,u) \equiv (q,v))$, then $p \approx q$ (resp., $p\equiv q$).
\end{restatable}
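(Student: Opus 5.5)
The plan is to transport the given pair of configurations to an arbitrary register word of $p$ by a type-preserving bijection of the data domain. The key observation is that the acceptance behaviour of a \dra depends only on word types: transitions are guarded by $\sim_R$-types, and deletions are determined by index sets $E$. Hence any bijection $\sigma:\Sigma\to\Sigma$ that preserves $R$ (an order-automorphism in the dense ordered case, an arbitrary bijection in the identity case) commutes with runs. That is, $(p,u)\xrightarrow{w}(p',u')$ implies $(p,\sigma(u))\xrightarrow{\sigma(w)}(p',\sigma(u'))$, and $\sigma(w)$ has the same word type as $w$ and the same length.

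First, fix $(p,u)\approx(q,v)$, and let $u'$ be any register word of $p$. Since $\autA$ is well-typed, $u$ and $u'$ have the same word type $\tau$, because both are register words of $p$. We then need an automorphism $\sigma$ of $(\Sigma,R)$ with $\sigma(u)=u'$, taken pointwise.
\begin{itemize}
\item In the identity case, $u$ and $u'$ consist of pairwise distinct symbols with the same equality pattern, so any bijection extending $u_i\mapsto u'_i$ works.
\item In the dense ordered case (say $\mathbb{Q}$ or $\mathbb{R}$ with $<$), the finite partial order-isomorphism $u_i\mapsto u'_i$ extends to an order-automorphism by piecewise-linear interpolation between consecutive points.
\end{itemize}
This extension is the step I expect to need the most care: it requires a precise statement of what ``dense domain'' allows. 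If only homogeneity of the domain is available, it suffices to extend $\sigma$ to the finitely many symbols of $u\cdot w$ for each individual $w$, which works just as well.

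Set $v'=\sigma(v)$; this is a register word of $q$, since it is reached from the initial configuration by the image of a run reaching $(q,v)$. For every $w$, the runs from $(p,u')$ and $(q,v')$ on $\sigma(w)$ end in the same states as the runs from $(p,u)$ and $(q,v)$ on $w$. Because $\sigma$ is surjective, every word is of the form $\sigma(w)$. So by \Cref{lem:dra-word-bound}, the same threshold $\ell$ that witnesses $(p,u)\approx(q,v)$ also witnesses $(p,u')\approx(q,v')$; the lengths are preserved, so the bound carries over unchanged. Alternatively, one can argue directly that the disagreeing words are the $\sigma$-images of the original ones, and hence lie in the same finitely many word types.

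The direction from register words of $q$ to those of $p$ is symmetric: apply the same argument with the roles of $p$ and $q$ exchanged, using that $\approx$ on configurations is symmetric. For the equivalence variant $(p,u)\equiv(q,v)$, the identical transport shows $\lang(\autA_{(p,u')})=\sigma(\lang(\autA_{(p,u)}))=\sigma(\lang(\autA_{(q,v)}))=\lang(\autA_{(q,v')})$, which yields $p\equiv q$.
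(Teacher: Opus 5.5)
Your proposal is correct and is essentially the paper's argument: choose a type-preserving bijection $\sigma$ with $\sigma(u)=u'$, set $v'=\sigma(v)$, and transport runs along $\sigma$ so that acceptance from $(p,u'),(q,v')$ on $w$ matches acceptance from $(p,u),(q,v)$ on $\sigma^{-1}(w)$, then argue symmetrically for the other direction. Your direct transport avoids the paper's unnecessary appeal to the pumping lemma, and you also spell out the $\equiv$ variant, which the paper leaves implicit.
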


Let $p$ and $q$ be two states of $\autA$ with register types $\tau$ and $\eta$, respectively, and let $u$ be a word of type $\tau$. By \Cref{lem:almost-eq-configuration_imp_almost-eq-states}, we have $p \approx q$ if and only if there exists a word $v$ of type $\eta$ such that $(p,u) \approx (q,v)$. Moreover, by \Cref{lem:decidability_config_almost-eq}, almost-equivalence of configurations is decidable. Hence, to decide whether $p \approx q$, it suffices to determine whether there exists such a word $v$.

The main difficulty lies in the fact that there are infinitely many candidates for $v$. Our key observation is that it is unnecessary to consider all such words. Instead, one can effectively compute a \emph{finite} set $W_\eta \subset \alphabet$ such that there exists a word $v$ with $(p,u) \approx (q,v)$ if and only if there exists $v'$ over $ W_\eta$ satisfying $(p,u) \approx (q,v')$.
The following lemma helps define the finite set $W_\eta$.

\begin{restatable}{lemma}{lemAlmostEqSymbolsRange}\label{lem:almost-eq_symbols_range}
    Let $\autA$ be a  \dra over a dense domain and $(p,u), (q,v)$ be two configurations of $\autA$. 
        %
    Then, if $(p,u) \approx (q,v)$, then for every word $v' \in \finwords$ with $uv\sim_R uv'$, we have $(p,u) \approx (q,v')$.
\end{restatable}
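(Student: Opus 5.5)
The idea is to transfer runs along a type-preserving renaming that fixes $u$ and sends $v$ to $v'$. Such a renaming turns every word that distinguishes $(p,u)$ and $(q,v')$ into one of the \emph{same word type} that distinguishes $(p,u)$ and $(q,v)$. Since $(p,u)\approx(q,v)$ leaves only finitely many such types, the same holds for $(p,u)$ and $(q,v')$.

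\textbf{Step 1 (runs depend only on types).} A transition $(p,\tau,E,q)$ fires from $(p,x)$ on $a$ exactly when $x\cdot a$ has type $\tau$, and the new register word is obtained by deleting the positions in $E$. By induction on $|w|$ this gives the following: if $x\cdot w\sim_R x'\cdot w'$, then the runs of the \dra{} $\autA$ from $(r,x)$ on $w$ and from $(r,x')$ on $w'$ visit the same sequence of states. Moreover, the register words stay related positionally, since both deletions remove the same indices. In particular one run ends in $F$ iff the other does.

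\textbf{Step 2 (extending the partial isomorphism).} Fix any $w$. Since $uv\sim_R uv'$, the map $h$ sending the $i$-th symbol of $uv$ to the $i$-th symbol of $uv'$ is well defined. It is injective and preserves $R$ (and equality), and it is the identity on the symbols of $u$. I extend $h$ symbol by symbol to the letters of $w$, using a back-and-forth step. If a letter of $w$ already lies in the domain, keep its image. Otherwise it lies strictly between two consecutive domain points (or beyond all of them); density of $\Sigma$ lets me choose an unused image in the corresponding gap of the codomain. In the identity-relation case any fresh symbol will do, since $\Sigma$ is infinite. For the unbounded gaps I will assume, as for $\mathbb{Q}$, that the domain has no endpoints. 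This is the one technical point I expect to need care, and I would either state it as an assumption on the dense domain or handle it via the standing choice of $\mathbb{Q}$. The result is a word $w'=h(w)$ with $u\,v\,w\sim_R u\,v'\,w'$, hence $u w\sim_R u w'$, $v w\sim_R v' w'$, and $w\sim_R w'$. The construction is symmetric: from any $w'$ one obtains $w=h^{-1}(w')$ in the same way.

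\textbf{Step 3 (conclusion).} Let $w'$ be a word on which the runs from $(p,u)$ and $(q,v')$ disagree on acceptance. Take $w=h^{-1}(w')$ as in Step 2. By Step 1, the run from $(p,u)$ on $w$ ends in the same state as the run from $(p,u)$ on $w'$, because $uw\sim_R uw'$. Likewise the run from $(q,v)$ on $w$ ends in the same state as the run from $(q,v')$ on $w'$. So $w$ distinguishes $(p,u)$ and $(q,v)$, and it has the same word type as $w'$. Hence the set of distinguishing word types for $(p,u),(q,v')$ is contained in that for $(p,u),(q,v)$. The latter set is finite by \Cref{def:dra-almost-eq-configurations}, which gives $(p,u)\approx(q,v')$.

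One could alternatively phrase the argument via \Cref{lem:dra-word-bound}, since the length-$\ell$ threshold transfers verbatim because $|w|=|w'|$.
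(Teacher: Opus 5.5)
Your proof is correct and is essentially the paper's argument: both transport distinguishing words along a type-preserving renaming that fixes $u$ and relates $v$ to $v'$. The paper takes a global order-preserving $\sigma$ with $\sigma(uv')=uv$ and concludes via the length threshold of \Cref{lem:dra-word-bound}, whereas you extend the finite partial isomorphism word by word (needing only density, not a global automorphism), count word types directly, and make explicit in Step~1 the run-invariance the paper uses implicitly; your concern about unbounded gaps is moot, because a set dense in $\mathbb{R}$ in the paper's sense is necessarily unbounded in both directions and so has no endpoints.
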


To examine whether there exists a register assignment $v$ for a given configuration $(p,u)$ and a state $q$ such that $(p,u) \approx (q,v)$, \Cref{lem:almost-eq_symbols_range} suggests that we only need to consider the word type of the concatenation $u \cdot v$; the exact values in $v$ do not affect the result.
Therefore, given a word $u $ and $v$ of $\eta$-type with $|\eta| = k$, to compute the alphabet $W_{\eta}$ of $v'$ that cover all possible types of $u\cdot v$ (and thus $uv'$), i.e., $uv\sim_R uv'$, we do the following:
\begin{enumerate}
    \item Let $a_1, a_2, \cdots , a_m$ be the increasing sequence of the letters in $u$.
    Let $a_0 = -\infty$ and $a_{m+1} = +\infty$.

    \item Let $b_{i,1}, b_{i,2}, \cdots, b_{i,k}$ be $k$ fresh letters in increasing order such that $a_i < b_{i,j} < a_{i+1}$ for all $0\leq i \leq m$ and $1\leq j \leq k$. 
    \item Then, we define $W_{\eta} = u\cup \bigcup_{i=0}^m \{b_{i,1}, \cdots, b_{i,k}\}$.
\end{enumerate}
Note that if the binary relation $R$ is identity test, then we only need to add all letters in $u$ together with $k$ fresh letters into $W_{\eta}$.

Hence, according to \Cref{lem:almost-eq_symbols_range}, it is not hard to see:
\begin{corollary}\label{coro:relative-alphabet}
    Let $p$ has register type $u$ and $q$ has register type $\eta$.
    Let $W_{\eta}$ be the finite set computed above.
    Then, there exists a word $v$ such that $(p, u) \approx (q, v)$ if, and only if, there exists a word $v'$ over $W_{\eta}$ such that $(p, u) \approx (q, v')$.
\end{corollary}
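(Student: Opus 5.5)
The plan is to derive the corollary from \Cref{lem:almost-eq_symbols_range}, together with a combinatorial claim that $W_\eta$ realizes every word type of a concatenation $u\cdot v$ with $v$ of type $\eta$. The direction ``if'' is immediate: a word $v'$ over $W_\eta$ with $(p,u)\approx(q,v')$ is in particular a word, so $v:=v'$ witnesses the left-hand side. (If one insists that $v$ be a legal register word of $q$, the same holds for $v'$, since $v'$ is taken of type $\eta$.) All the work is in the ``only if'' direction.

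For ``only if'', suppose $(p,u)\approx(q,v)$ with $v=c_1\cdots c_k$ of type $\eta$. First I will construct $v'$ over $W_\eta$ with $uv\sim_R uv'$. Each $c_j$ either equals some letter $a_i$ of $u$, or lies strictly in an open gap $(a_i,a_{i+1})$ for a unique $0\le i\le m$. In the first case I set $c'_j:=a_i$. For the second case, fix a gap $i$ and let $d_1<\dots<d_r$ be the distinct values of $v$ lying in that gap; then $r\le k$. I map $d_t\mapsto b_{i,t}$ and set $c'_j$ to the image of $c_j$. The resulting map $\phi$ on $u\cup v$ is the identity on $u$ and is strictly order-preserving. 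Indeed, inside a gap it preserves the order of the $d_t$'s, and across gaps or against letters of $u$ the order is determined by the position of the value relative to the $a_i$'s, which $\phi$ keeps. Because $\phi$ is injective and monotone, it preserves both $=$ and $<$ between any two positions of $uv$. Hence $uv\sim_R u\phi(v)=uv'$, and in particular $v'$ has type $\eta$. In the identity-relation case the same argument works with $\phi$ mapping the distinct values of $v$ outside $u$ injectively to the $k$ fresh letters.

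Next I invoke \Cref{lem:almost-eq_symbols_range}. Since $(p,u)\approx(q,v)$ and $uv\sim_R uv'$, it yields $(p,u)\approx(q,v')$, and $v'$ is a word over $W_\eta$, as required. For the identity-relation case I need the analogue of \Cref{lem:almost-eq_symbols_range}. I expect it follows from the same argument, because over an unordered domain any type-preserving bijection is an automorphism of $(\Sigma,=)$.

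The main obstacle is making sure that enough fresh letters exist in each gap and that the reading of ``$R$'' is consistent. The paper's $\sim_R$ is defined by the single relation $R$; for a dense order I read $R$ as $\le$ (or $<$), so that equality is recovered as $a\le b\wedge b\le a$. Under that reading, injectivity and monotonicity of $\phi$ suffice. Density of $\Sigma$ guarantees that the required $k$ letters $b_{i,1}<\dots<b_{i,k}$ exist in every gap, including the unbounded end gaps. The bound $r\le k$ per gap is what justifies taking exactly $k$ fresh letters in each gap.
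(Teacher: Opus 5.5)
Your proposal is correct and follows the paper's route: the paper simply says the corollary follows from \Cref{lem:almost-eq_symbols_range}, and you likewise realize the type of $u\cdot v$ by some $u\cdot v'$ with $v'$ over $W_\eta$ and then apply that lemma. Your gap-by-gap order-preserving map (the identity on the letters of $u$, with the at most $k$ values of $v$ in each gap sent to $b_{i,1}<\dots<b_{i,k}$) fills in the combinatorial step the paper leaves as ``not hard to see''.
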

    
We now give an example to explain how to compute $W_{\eta}$.
Consider a DRA $\A$ over $(\mathbb{Q}, <)$:
let $p$ and $q$ be two states where $p$ has a register type $\tau = 1 \cdot 2$ while $q$ has the register type $\eta = 2 \cdot 1$.
Recall that $\tau$ and $\eta$ represent the relative order of registers.
Let $(p, u)$ be a configuration of $p$ where $u = 3 \cdot 7$.
To cover all possible configurations $(q, v')$ associated with $q$, it suffices to consider valuations $v$ over the finite alphabet
$
W_{\eta} = \{3, 7\} \cup \{1, 2\} \cup \{4, 5\} \cup \{9, 11\},
$
constructed as follows.
Let $a_1 = 3$ and $a_2 = 7$ in $u$, and define $a_0 = -\infty$ and $a_3 = +\infty$.
\begin{itemize}
    \item Between $a_0 = -\infty$ and  $a_1 = 3$, we pick $b_{0, 1} = 1$ and $b_{0,2} = 2$.
    \item Between $a_1 = 3$ and  $a_2 = 7$, we select $b_{1, 1} = 4$ and $b_{1,2} = 5$.
    \item Between $a_2 = 7$ and $a_3 = +\infty$, we choose $b_{2,1} = 9$ and $b_{2,2} = 11$.
\end{itemize}

Let $p$ and $q$ be two states in $\A$ where $\tau$ is the register type of $p$ and $\eta$ is the register type of $q$.
To check whether $p \approx q$ holds, we perform the following steps:
\begin{enumerate}
    \item Find a word $u$ of type $\tau$.
    \item Compute the finite set $W_{\eta}$ based on $u$ and $\eta$.
    \item Enumerate all $v'$ of type $\eta$ over $W_{\eta}$ and check whether $(p, u) \approx (q, v')$.
    \item If there is a word $v'$ with $(p, u) \approx (q, v')$, then we conclude that $p \approx q$;
    otherwise, we know that $p \not\approx q$.
\end{enumerate}

 Since words of a given type over the finite set $W_{\eta}$ can be enumerated and, by \Cref{lem:decidability_config_almost-eq}, the configuration almost-equivalence problem for \dra{s} is decidable, we immediately obtain the following result:

\begin{theorem}\label{thm:almost-eq_decidability}
The state almost-equivalence problem for  \dra{s} is decidable.
\end{theorem}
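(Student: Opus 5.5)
The plan is to turn the four-step procedure preceding the statement into a terminating decision procedure and to prove it correct. Fix states $p$ and $q$ of $\autA$, with register types $\tau$ and $\eta$. Because $\autA$ is well-typed and trimmed, both types are determined by the states and are computable from the finite transition relation. First I will pick a concrete witness word $u$ of type $\tau$. For the identity relation this is any sequence of distinct values matching the equality pattern. For a dense order it is, for example, a sequence of integers respecting the order pattern, since a dense domain realizes every order type. This $u$ is finite and computable.

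Correctness comes down to showing that
\[
p \approx q \iff \exists v' \text{ over } W_\eta \text{ of type } \eta \text{ with } (p,u)\approx(q,v').
\]
\begin{itemize}
\item For the backward direction, apply \Cref{lem:almost-eq-configuration_imp_almost-eq-states} to $(p,u)\approx(q,v')$.
\item For the forward direction, unfold \Cref{def:dra-almost-eq-states}. Since $u$ is a register word of $p$, there is a register word $v$ of $q$ with $(p,u)\approx(q,v)$. \Cref{coro:relative-alphabet} then replaces $v$ by some $v'$ over $W_\eta$. This $v'$ still has type $\eta$, because $uv\sim_R uv'$ and word types restrict to suffixes.
\end{itemize}
One subtlety needs checking here. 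The quantifier in \Cref{def:dra-almost-eq-states} ranges over \emph{all} register words of $p$, so I must confirm that fixing a single $u$ is enough. This is exactly what \Cref{lem:almost-eq-configuration_imp_almost-eq-states} provides: one almost-equivalent pair of configurations already yields $p\approx q$.

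For effectiveness, note that $W_\eta$ is finite and computable from $u$ and $k$: it consists of the letters of $u$ plus at most $(m+1)k$ fresh values chosen between consecutive letters, where density guarantees these values exist. Hence the set of words of length $|\eta|$ over $W_\eta$ is finite, and filtering them by type $\eta$ is a finite check of the relation $R$. For each candidate $v'$, \Cref{lem:decidability_config_almost-eq} decides whether $(p,u)\approx(q,v')$. In detail, that lemma builds product automata with a DFA for $\Sigma^{> \lenTA}$, compares the resulting languages via the product/emptiness reduction, and relies on \Cref{lem:almost-eq_with_lenTA_bound} for the correctness of the length bound. The procedure therefore makes finitely many calls to a decidable test and returns the disjunction of the answers.

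The main obstacle is the interface with \Cref{coro:relative-alphabet}, which depends on \Cref{lem:almost-eq_symbols_range} and is stated for dense domains. For the identity relation I would argue separately that $u$ together with $k$ fresh letters realizes every equality pattern of $uv$. For that case the analogue of \Cref{lem:almost-eq_symbols_range} follows from invariance under a bijection of $\Sigma$ that fixes $u$ and maps $v$ to $v'$: such a bijection preserves runs, types and acceptance, so it preserves almost-equivalence. The same automorphism argument, with order-preserving bijections of the dense domain, covers the ordered case, and it is the step I would write out carefully.
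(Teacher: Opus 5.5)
Your proposal is correct and follows the paper's argument. You fix one word $u$ of type $\tau$, reduce $p \approx q$ to the existence of some $v$ of type $\eta$ with $(p,u)\approx(q,v)$ (forward by \Cref{def:dra-almost-eq-states}, backward by \Cref{lem:almost-eq-configuration_imp_almost-eq-states}), restrict $v$ to the finite alphabet $W_\eta$ via \Cref{coro:relative-alphabet}, and decide each of the finitely many candidates with \Cref{lem:decidability_config_almost-eq}; your separate bijection-invariance justification for the identity-relation case is a sensible elaboration of what the paper only remarks in passing when constructing $W_\eta$.
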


\subsection{The Hyper-Minimization Process for \dra{s}}\label{sec:hyper-minimization_process}


\begin{algorithm}[t]
\caption{Hyper-minimization of a \dra $\mathcal{A}$. 
}
\label{algo:hyper-minimization}
    $\A=\textit{minimize}(\mathcal{A})$\;\label{line:minimal}

    $\A=\datamin(\mathcal{A})$\; \label{line:data_minimal}
    \While{$\exists p\neq q \text{ with the same register type}. (p\approx q \text{ and } p\text{ is a preamble} ) \vee (p\equiv q) $\label{line:merge_start}}{
        $\autA = \merge(p \rightarrow q, \autA)$;\label{line_merge}
    }\label{line:merge_end}
    return $\mathcal{A}$\;
\end{algorithm}
\begin{algorithm}[t]
\caption{The \datamin~step of a minimal \dra $\autA$.}
\label{algo:hyper-data-minimization}
    \For{\text{preamble state } $q\in Q$ \text{ and } $w$ \text{ is a shortest word leading } $\autA$ \text{ to } $q$}{
     Let $u$ be such that $(q_0, \emptyword) \xrightarrow{w} (q, u)$ in the original well-typed DRA\;
        \For{symbol $a\in u$}{
            \If{$a$ is not $\lenTA$-memorable}{
                remove $a$ from $u$\;
            }
        }
        make $q$ $u$-register-type\;
    }
\end{algorithm}

With the check for the almost-equivalence relation between two states established, we now present \Cref{algo:hyper-minimization} to hyper-minimize a given \dra $\autA$. For ease of explanation, we assume that every preamble state can reach an accepting state via infinitely many distinct paths in the underlying graph. Otherwise, such a state can simply be removed from the automaton, as the resulting language remains almost-equivalent to the original.

Similar to the process for DFAs, we first minimize the \dra. However, before entering the loop to merge almost-equivalent states, we perform a {\datamin} step to ensure that only those symbols that can affect the acceptance of infinitely many extensions are kept in the registers. The details of this {\datamin} step are provided in \Cref{algo:hyper-data-minimization}. We then check whether there exist almost-equivalent states $p$ and $q$ such that $p$ is a preamble state and $p,q$ share the \emph{same} register type; if so, they can be merged.



First, consider the process of hyper-data-minimization. Recall that a symbol in a word $x$ is memorable if its appearance in some word extension $w$ is relevant to whether $x \cdot w$ is accepted. Furthermore, in a \dra recognizing the language, every memorable symbol must be kept in the registers. However, there are no restrictions on the length of $w$. On the contrary, by \Cref{lem:almost-eq_with_lenTA_bound}, we have that two configurations $(p,u)$ and $(q,v)$ of a given \dra $\autA$ are almost-equivalent if and only if for every word $w$ of length greater than $\lenTA$, the runs from these configurations on $w$ are either both accepting or both rejecting. This suggests that if a symbol $a \in u$ is not relevant to the acceptance of any word $w$ with length greater than $\lenTA$, then a \fullhdm \dra for $\autA$ does not need to keep $a$ in its registers. To formalize this, we introduce the following definition:

\begin{definition}[$\ell$-Memorable Letters]
    Let $\autA$ be a \dra, $(p,u)$ be a configuration, and $\ell \in \mathbb{N}$.
    A symbol $a \in u$ is called $\ell$-memorable if there exist a word $w \in \Sigma^{\ge \ell}$ and a symbol $b$ such that $w \sim_R w[a/b]$, and the configurations reached from $(p,u)$ after reading $w$ and $w[a/b]$, respectively, are such that exactly one of them is accepting.
\end{definition}

The notion of $\ell$-memorability identifies whether a specific data value stored in a register is essential for the future acceptance of word extensions of length at least $\ell$. Since the hyper-data-minimization process relies on removing symbols that are not $\lenTA$-memorable, it is necessary to ensure that this property can be effectively checked.
Note the distinction from the notion of $L$-memorability (cf.~\Cref{def:memorable}): here, $l$ is a parameter.

\begin{restatable}{lemma}{lemDecidabilityOfEllMemorability}\label{lem:decidability_of_ell_memorability}
    Let $\autA$ be a \dra, $(p,u)$ a configuration of $\autA$, and $\ell \in \mathbb{N}$. It is decidable whether a symbol in $u$ is $\ell$-memorable.
\end{restatable}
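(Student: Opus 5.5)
We reduce $\ell$-memorability of a symbol $a\in u$ to finitely many configuration almost-equivalence and emptiness checks, all decidable by \Cref{lem:decidability_config_almost-eq} and the emptiness results for \dra{s}. The first step is to tame the universal choice of $b$. As in \Cref{lem:almost-eq_symbols_range}, only the word type of $u\cdot b$ matters. Whether some $w$ witnesses memorability for $(a,b)$ is invariant under renaming data by an order-preserving (resp.\ injective) map that fixes $u$ pointwise. Such a map exists between any two $b,b'$ with $ub\sim_R ub'$, by density or by the identity relation. Hence it suffices to let $b$ range over a finite set $W$ of representatives: the letters of $u$ plus one fresh letter in each gap between consecutive letters of $u$, or a single fresh letter for the identity relation. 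This is analogous to the construction of $W_\eta$. Letters $b$ of $u$ with $b\neq a$ can be discarded, since $w\sim_R w[a/b]$ is then generally impossible when $w$ mentions $a$. If $w$ does not mention $a$, the two runs coincide, so such $w$ are never witnesses anyway.

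Next, for fixed $b\in W$, we build a product \dra that simulates $\autA$ from $(p,u)$ on $w$ and, in parallel, $\autA$ from $(p,u)$ on $w[a/b]$. The main technical point is that the second copy must read $w[a/b]$ while the automaton only reads $w$. We handle this by giving the product two extra registers storing the constants $a$ and $b$, initialised in the start configuration $(\langle p,p\rangle, u\cdot a\cdot b)$ (or simply reusing the register of $u$ holding $a$). Whenever the input letter equals the stored $a$, the second component takes the transition of $\autA$ determined by the type of its registers extended with $b$; otherwise it takes the transition on the letter itself. Since all guards are word types over the register contents plus the stored constants, this is expressible by transitions of the required form. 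We also let the product track the type-preservation condition $w\sim_R w[a/b]$. Over $(\mathbb{Q},<)$, for each letter $c$ of $w$ the product checks that $c$ and $a$ compare the same way as $c$ and $b$, which is a type test against the stored $a,b$. Together these tests characterise $w\sim_R w[a/b]$ when $b$ lies in a gap adjacent to $a$; other choices of $b$ are admissible only if $w$ avoids the region between $a$ and $b$, and that is exactly what the tests enforce. The product moves to a sink as soon as a test fails. A state of the product is accepting iff exactly one component is in $F$.

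Finally, we intersect with the DFA $\autA_\ell$ for $\Sigma^{\ge\ell}$, as in the discussion before \Cref{lem:decidability_config_almost-eq}, and test emptiness. The symbol $a$ is $\ell$-memorable iff one of the finitely many products, one per $b\in W$, accepts some word. We expect the main obstacle to be arguing that the product correctly simulates runs on $w[a/b]$ even though $\autA$'s own registers may drop or re-store $a$. Keeping $a$ and $b$ in dedicated, never-erased registers resolves this, at the cost of possibly making the product non-well-typed; that is harmless, because emptiness is decidable for arbitrary \dra{s}.
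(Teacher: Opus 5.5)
Your route differs from the paper's and is sound in its main lines. One step is wrong, though, and must be dropped.

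\textbf{The false step.} You discard the candidates $b\in u$ with $b\neq a$, claiming that $w\sim_R w[a/b]$ is then impossible when $w$ mentions $a$. In the paper's definition of $\ell$-memorability, the side condition $w\sim_R w[a/b]$ compares positions of $w$ only and does not involve $u$. You seem to have in mind the condition $u\cdot w\sim_R(u\cdot w)[a/b]$ of \Cref{def:memorable}. The actual condition holds whenever $w$ contains $a$, does not contain $b$, and (in the ordered case) has no letter strictly between $a$ and $b$, whether or not $b$ is a register value.

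Such $b$ can be indispensable. Take $\ell=1$ and $u=a\cdot d$. Let $\autA$ move from $p$ to an accepting state exactly on the type where the input equals the second register, and to a rejecting sink otherwise; the accepting state also goes to the sink. Then $w=a$ with $b=d$ witnesses that $a$ is $1$-memorable. But for every $b\notin u$, the runs on $w$ and $w[a/b]$ have the same acceptance status for every $w$. So your final ``iff'' fails. The repair is to keep every $b\in W\setminus\{a\}$. Your product handles $b\in u$ unchanged, using the register that already holds $b$ as the permanent $b$-register.

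\textbf{A smaller slip.} The per-letter test must be applied only to letters $c\neq a$. Taken literally, for $c=a$ it compares ``$c$ equals $a$'' with ``$c$ versus $b$'', fails, and sends every $w$ containing $a$ to the sink. That kills every potential witness. With $c=a$ excluded (and the test $c\neq b$ in the identity case), the tests characterise $w\sim_R w[a/b]$ whenever $a$ occurs in $w$. That is the only case that matters, since otherwise the two runs coincide.

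\textbf{Comparison with the paper.} With these fixes, your argument works and takes a different route. The paper builds no product automaton. It uses a two-run pumping argument in the style of \Cref{lem:pump_two_runs} to bound the length of a witness $w$. It then enumerates witnesses over a finite alphabet built like $W_\eta$, a brute-force procedure analogous to the DFA almost-equivalence check. This gives an explicit bound and needs no automaton construction. For the pumping to be valid, however, two things are required. The renaming applied to the suffix must fix $a$ and $b$, and pumping must stop before the length drops below $\ell$. Only then do $|w|\ge\ell$, the pairing $w'=w[a/b]$, and $w\sim_R w[a/b]$ all survive, and the paper's sketch leaves these points implicit. Your construction enforces the substitution and the side condition inside the automaton, through the never-erased registers for $a$ and $b$ and the per-letter tests. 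So no such invariant has to be maintained under pumping. The price is relying on decidable emptiness for non-well-typed DRAs started from an arbitrary configuration. The paper uses the same machinery for \Cref{lem:decidability_config_almost-eq}.
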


By \Cref{lem:almost-eq_with_lenTA_bound}, two configurations are almost-equivalent if and only if their runs on every word extension of length greater than $\lenTA$ either both accept or both reject. One might therefore assume that the hyper-minimization process aims to produce an automaton that does not retain any non-$\lenTA$-memorable symbols. However, given a configuration $(p,u)$, if $p$ is a kernel state and a symbol $a \in u$ is memorable, then $a$ must be retained regardless of whether it is $\lenTA$-memorable. Otherwise, by the definitions of kernel states and memorability, the resulting automaton would not be almost-equivalent to the original. Accordingly, we have the following lemma:

\begin{lemma}\label{lem:hyper-data-minimal_bound}
    Let $\autA$ be a minimal \dra. Then, $\autA$ is \fullhdm if and only if for every configuration $(p,u)$ of $\autA$ where $p$ is a preamble state, every symbol $a \in u$ is $\lenTA$-memorable.
\end{lemma}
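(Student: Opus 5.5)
Both directions are argued by contraposition. I compare register contents of $\autA$ with those of an arbitrary \dra $\autB$ that almost-recognizes $\lang(\autA)$. The basic tool is an almost-equivalent analogue of the memorability argument of \cite{Ley:10:TR}: a symbol whose value influences acceptance of infinitely many word types of extensions must be stored in the registers of any almost-recognizing \dra. \Cref{lem:almost-eq_with_lenTA_bound} and \Cref{lem:pump_two_runs} turn ``infinitely many word types'' into ``some extension of length greater than $\lenTA$''.

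\emph{($\Rightarrow$).} Suppose some configuration $(p,u)$ with $p$ a preamble state contains a symbol $a\in u$ that is not $\lenTA$-memorable. I will build an \ra $\autB$ with fewer registers that almost-recognizes $\lang(\autA)$; this shows $\autA$ is not \fullhdm.
\begin{enumerate}
\item Run \Cref{algo:hyper-data-minimization} on $p$: drop $a$ from the register type of $p$.
\item Propagate the change. Along the finitely many preamble paths out of $p$, use copies of states whose register type omits the corresponding image of $a$. Whenever a transition would have compared the input with $a$, pick the outcome read off a fixed representative.
\end{enumerate}
Since $p$ is a preamble state, only finitely many word types of prefixes reach $p$. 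For each such prefix, the modified run differs from the original only on extensions $w$ whose acceptance depends on the value of $a$. Non-$\lenTA$-memorability says no such $w$ has length at least $\lenTA$. So each prefix contributes only finitely many word types to the symmetric difference, and the total is finite. For this to lower the register count, the configuration with the maximal number of registers must be among those reduced. I would argue that in a minimal well-typed \dra the register maximum is either attained at a preamble state with a removable symbol, or cannot be improved at all. This step is delicate.

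\emph{($\Leftarrow$).} Assume every symbol stored at a preamble configuration is $\lenTA$-memorable, and let $\autB$ almost-recognize $\lang(\autA)$. Fix a configuration $(p,u)$ of $\autA$ reached on prefix $x$ with $|u|=k$, and let $(p',u')$ be the configuration of $\autB$ reached on $x$. I claim every $a\in u$ lies in $u'$. There are two cases.
\begin{enumerate}
\item $p$ is a preamble state. Then $a$ is $\lenTA$-memorable, witnessed by some $w$ with $|w|\ge\lenTA$ and some $b$. By the second part of \Cref{lem:pump_two_runs}, applied to the runs from $(p,u)$ on $w$ and on $w[a/b]$, we can pump to arbitrarily long witnesses. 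This yields infinitely many word types of extensions $xw$ whose acceptance is sensitive to $a$.
\item $p$ is a kernel state. Then infinitely many prefixes reach $p$. Because $\autA$ is minimal, $a$ is $\lang(\autA)$-memorable, and the same witness extension can be prepended by each of these prefixes, again giving infinitely many types.
\end{enumerate}
If $a\notin u'$, then $\autB$ cannot separate $w$ from $w[a/b]$ after $x$. Consequently at least one word of each such pair lies in the symmetric difference, and these pairs cover infinitely many word types, contradicting almost-equality. Hence $\autB$ has at least $k$ registers.

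The main obstacle is the pumping step in the preamble case: after pumping, the words must still satisfy $w\sim_R w[a/b]$, with the two runs accepting differently. I would first handle this by running \Cref{lem:pump_two_runs} on the pair of runs together with the bookkeeping of $a$'s position, enlarging the constant if needed. Otherwise I would fall back on \Cref{lem:almost-eq_with_lenTA_bound}: use the configurations $(p,u)$ and $(p,u[a/b])$, which are not almost-equivalent by the contrapositive of that corollary.
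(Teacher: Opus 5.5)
Your ($\Leftarrow$) argument can be repaired, but the ($\Rightarrow$) direction has a gap that cannot be closed, and it is exactly the step you call delicate. The paper gives no proof beyond the informal remark before the lemma.

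To refute hyper-data-minimality you must produce an almost-recognizing automaton with fewer registers \emph{in total}. Dropping a non-$\lenTA$-memorable symbol at a preamble state achieves nothing when the register maximum is attained at a kernel state. The second branch of your dichotomy (``cannot be improved at all'') is precisely a case where the hypothesis of your contrapositive holds but its conclusion fails.

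Concretely, over $(\mathbb{Q},=)$ take $L=\{a_1a_2a_3\mid a_1=a_3\}\cup\{w\mid |w|\ge 4 \text{ and the last two letters of } w \text{ are equal}\}$. Its minimal DRA has the following states:
\begin{itemize}
\item the initial state $q_0$;
\item a preamble state $q_1$ storing $a_1$;
\item an accepting and a rejecting kernel state (last two letters equal or different), each storing the last letter.
\end{itemize}
The class of $a_1a_2$ coincides with the rejecting kernel state, so $q_1$ moves there on $a_2$ while keeping $a_1$. In $(q_1,a_1)$ the value $a_1$ affects only extensions of length $2$, so it is not $\lenTA$-memorable; the paper's standing assumption on preamble states holds here. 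Yet every DRA almost-recognizing $L$ needs a register: a $0$-register automaton sees only lengths, and $L$ splits every length $n\ge 4$. So under the paper's register-count definition, ($\Rightarrow$) is false and no construction can rescue it. The lemma can hold only under a local reading that compares stored symbols configuration by configuration. That local reading is in fact how the paper uses the lemma in the proof of \Cref{lem:mapping_bw_almost-eq_dras}.

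For ($\Leftarrow$), your fallback is the right route in the preamble case. \Cref{lem:pump_two_runs} and \Cref{lem:almost-eq_with_lenTA_bound} concern two runs on the \emph{same} word, not on $w$ and $w[a/b]$. Instead, note that $(p,u)$ and $(p,u[a/b])$ disagree on $w[a/b]$, hence on infinitely many word types $w'$. If $a\notin u'$, then $\autB$ reaches the same configuration on $x$ and on $x[a/b]$, so one of $xw'$, $x[a/b]w'$ is an error for each such $w'$. This needs the intended reading $uw\sim_R(uw)[a/b]$ of $\ell$-memorability, and $b$ chosen next to $a$ so that also $x\sim_R x[a/b]$.

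The kernel case, however, does not support your per-prefix claim that every $a\in u$ lies in $u'$. After a fixed prefix $x$, an almost-recognizing $\autB$ may discard $a$ at the price of finitely many errors. In the example above, it can forget the stored letter after every prefix of length $5$ and err only on words of length $6$. The other prefixes reaching $p$ do not help, because $\autB$ may still store the corresponding symbol after them. Argue instead as follows. Each prefix reaching $p$ after which $\autB$ misses a symbol of $u$ yields an error word at least as long as that prefix. Prefixes reaching a kernel state have unbounded length, so only finitely many prefix types are bad, and you can pick $x$ among the good ones.

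Finally, make explicit that $\autB$ ranges over DRAs. For arbitrary RAs, as the paper's definition literally says, ($\Leftarrow$) fails. A nondeterministic one-register automaton recognizes $\{a_1\cdots a_n\mid n>k,\ a_n\in\{a_1,\dots,a_k\}\}$, whose minimal DRA needs $k$ registers and satisfies the lemma's condition.
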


Recall our assumption that all preamble states can reach an accepting state via infinitely many distinct paths. Without this assumption, there could exist a configuration $(p,u)$ in a \fullhdm $\autA$ where $p$ is a preamble state and $a \in u$. In such a case, the symbol $a$ would not be $\lenTA$-memorable simply because the language recognized from $p$ contains only finitely many word types; consequently, none of the values in $u$ would be $\lenTA$-memorable. See \Cref{app:ex_preamble_assumption} for an example.

Intuitively, our hyper-data-minimization process ensures that the automaton does not retain any non-$\lenTA$-memorable symbols for preamble states. However, when a non-$\lenTA$-memorable symbol is removed from a state's register type, the automaton may become non-deterministic. For instance, suppose state $p$ has register type $1\cdot 3 \cdot 5$, and there are two transitions $(p, 1\cdot 3 \cdot 5\cdot 2, E, q)$ and $(p, 1\cdot 3\cdot 5\cdot 4, E', q')$. If we remove the value at index $3$ from the types $1\cdot 3 \cdot 5$, $1\cdot 3\cdot 5\cdot 2$, and $1\cdot 3\cdot 5\cdot 4$, then $p$ may have non-deterministic outgoing transitions. To resolve this problem, we require the following lemma:

\begin{restatable}{lemma}{lemNondeterministicRemovable}\label{lem:nondeterminitic_removeble}
    Let $p$ be a preamble state, $(p,u)$ a configuration of a \dra $\autA$ over $(\Sigma,R)$ with $a \in u$, and let $v$ be the register assignment obtained by removing $a$ from $u$. Let $\delta = (p, \tau, E, q)$ and $\delta' = (p, \tau', E', q')$ be two outgoing transitions from $p$, where $\tau = u \cdot b$ and $\tau' = u \cdot b'$ such that $v \cdot b \sim_R v \cdot b'$. If $a$ is not $\lenTA$-memorable in $u$, then the automaton obtained by removing $\delta$ or $\delta'$ and setting $p$ to be $v$-typed is almost-equivalent to $\autA$.
\end{restatable}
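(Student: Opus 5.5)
We reduce the claim to a comparison of the two configurations reached from $(p,u)$ along $\delta$ and $\delta'$. Write $\mathcal{B}$ for the modified automaton, and suppose without loss of generality that $\delta$ is removed, so that $\mathcal{B}$ reads a symbol $b$ with $v\cdot b\sim_R v\cdot b'$ via $\delta'$. Since $p$ is a preamble state, only finitely many word types $x$ lead from $(q_0,\emptyword)$ to $p$. Therefore $\lang(\autA)\symdiff\lang(\mathcal{B})$ can only contain words of the form $x\cdot b\cdot w$, where $x$ leads to $p$, $b$ is read through $\delta$ in $\autA$, and $w$ is arbitrary. (Words that avoid $p$, or pass $p$ without taking $\delta$, have identical runs in both automata; a word can visit the preamble state $p$ at most once.) Since there are finitely many choices for the type of $x\cdot b$, it suffices to show the following: for each fixed reaching configuration $(p,u)$, the configuration $c=(q,\cdot)$ reached in $\autA$ via $\delta$ on $b$ is almost-equivalent, in the sense of \Cref{def:dra-almost-eq-configurations}, to the configuration $c'$ that $\mathcal{B}$ reaches on the same $b$ via $\delta'$. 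Then the difference from each prefix type involves only finitely many word types of $w$, and in total finitely many word types.

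The key step is to produce, from $b$, a symbol $b''$ with $u\cdot b''$ of type $\tau'$ and $(u\cdot b)[a/b'']$-type relations that are compatible. Concretely, the condition $v\cdot b\sim_R v\cdot b'$ means that $\tau$ and $\tau'$ differ only in how the new symbol relates to $a$. Using density (or, for the identity relation, freshness), we choose a symbol $a'$ so that $u[a/a']\cdot b$ has type $\tau'$: move $a$ across $b$ while preserving its order relations to every other register. Then $(p,u[a/a'])\xrightarrow{b}$ follows $\delta'$, and its successor is exactly the configuration $c'$ simulated by $\mathcal{B}$, up to renaming; since $a$ is dropped in $\mathcal{B}$'s register type, its value there is irrelevant. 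We now want $(p,u)\approx(p,u[a/a'])$ restricted to continuations starting with $b$. By \Cref{lem:almost-eq_with_lenTA_bound}, it suffices to show that for every $w'=b\cdot w$ of length greater than $\lenTA$, both runs agree on acceptance. Reading $w'$ from $(p,u[a/a'])$ is, after renaming $a'\mapsto a$ back, the same as reading $w'[a/\cdot]$ from $(p,u)$, with type preservation $w'\sim_R w'[a/a^*]$ for a suitable $a^*$. Non-$\lenTA$-memorability of $a$ then says these agree. Applying \Cref{lem:almost-eq-configuration_imp_almost-eq-states} and \Cref{lem:almost-eq_to_almost-eq} transfers this to the successors $c\approx c'$.

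The main obstacle is the renaming step. $\ell$-memorability is phrased as replacing $a$ inside the continuation $w$ while the register word $u$ stays fixed, whereas we need to replace $a$ in the register content. I would bridge this by the invariance of data languages under type-preserving bijections of $\Sigma$. On a dense domain, pick an order automorphism $\sigma$ fixing all symbols of $u$ except $a$ and sending $a$ to $a'$. Then $\lang(\autA_{(p,u[a/a'])})=\sigma(\lang(\autA_{(p,u)}))$, and comparing $w$ against $\sigma^{-1}(w)$ amounts to replacing the occurrences of $a$ (and possibly $a'$) in $w$. Here care is needed: $\sigma$ may move other unrelated symbols of $w$. I would handle this by choosing $w$ generically, observing that acceptance depends only on the type of $u\cdot w$, so $w$ can be assumed to avoid the interval moved by $\sigma$ except at occurrences of $a$. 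Finally, the bound on the lengths of distinguishing words yields finitely many types by \Cref{lem:dra-word-bound}.
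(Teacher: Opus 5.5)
Your outline matches the paper's. Only finitely many prefix types reach the preamble state $p$, and non-$\lenTA$-memorability of $a$ should force runs leaving $p$ through $\delta$ and through $\delta'$ to agree on all sufficiently long continuations. The paper simply asserts this second point (``otherwise $a$ would be $\lenTA$-memorable''). You try to justify it with one renaming plus genericity, and that step fails.

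Take a dense order with $b<a<b'$, which $v\cdot b\sim_R v\cdot b'$ allows; here neither new letter equals $a$. Moving $a$ across $b$ gives $a'<b<a$, and any order automorphism $\sigma$ fixing $v$ with $\sigma(a')=a$ sends $b$ to some $\sigma(b)>a$. So $\sigma(b\cdot w)$ leaves $p$ through $\delta'$, whereas $(b\cdot w)[a/a^*]$ still begins with $b$ and leaves $p$ through $\delta$. Your claim that reading $b\cdot w$ from $(p,u[a/a'])$ amounts to reading $(b\cdot w)[a/a^*]$ from $(p,u)$ is therefore false. Genericity of $w$ cannot repair this, because the letter $\sigma$ is forced to move is $b$ itself.

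More fundamentally, every instance of non-memorability compares $z$ with $z[a/c]$ from the same configuration $(p,u)$. If the first letter of $z$ is not $a$, both runs take the same outgoing transition of $p$. Hence no single instance can ever compare a $\delta$-run with a $\delta'$-run when $b,b'\neq a$.

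The missing idea is to chain instances, passing through words in which the letter to be crossed is replaced by $a$ itself. In the case above, suppose no letter of $w$ other than $b$ lies in the closed interval between $b$ and $b'$, and set $y=(b\cdot w)[b/a]$. Then $y$ begins with $a$, so it uses the third outgoing transition of $p$. Both $b\cdot w=y[a/b]$ and $b'\cdot w[b/b']=y[a/b']$ are legitimate instances of the definition. So these two words are accepted or rejected together from $(p,u)$ once $|b\cdot w|\geq\lenTA$.

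Repeating the trick one adjacent letter at a time proves the claim you actually need: if $a$ is not $\lenTA$-memorable in $(p,u)$, then any two words $z,z'$ of length at least $\lenTA$ with $v\cdot z\sim_R v\cdot z'$ are accepted or rejected together from $(p,u)$. From this, $(p,u)\approx(p,u[a/a'])$ holds outright, not just for continuations starting with $b$. Then $c\approx c'$ follows from \Cref{lem:almost-eq_to_almost-eq}, and the appeal to \Cref{lem:almost-eq-configuration_imp_almost-eq-states} is not needed.
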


Using \Cref{lem:hyper-data-minimal_bound}, we can derive the hyper-data-minimization process for \dra{s}, as shown in \Cref{algo:hyper-data-minimization}. For each preamble state $q$:
\begin{itemize}
    \item Obtain a representative word $w$ such that $(q_0,\epsilon) \xrightarrow{w} (q,u)$ for some configuration $u$ in the original input DRA. 
    \item Check every symbol $a \in u$ to determine whether $a$ is $\lenTA$-memorable, which is decidable by \Cref{lem:decidability_of_ell_memorability}.
    \item For every incoming transition $(p,\tau,E,q)$ to $q$, modify the set $E$ s.t. any symbol that is not $\lenTA$-memorable is no longer kept in the registers upon entering $q$.
    \item Update the word type $\eta$ of every outgoing transition $(q,\eta,G,r)$ from $q$. By \Cref{lem:nondeterminitic_removeble}, if there is more than one such transition, we can arbitrarily retain one of them.
\end{itemize}

Next, we consider the merging of almost-equivalent states $p$ and $q$ in \Cref{algo:hyper-minimization}, starting from 
\Cref{line:merge_start}. 
There are two cases in which we cannot merge $p$ and $q$: (i) Both $p$ and $q$ are kernel. (ii) $p$ and $q$ have different register types.

Case (i) is also considered in DFA hyper-minimization. Let $\lang_{q_0\rightarrow p}$ be the set of words for which the run from the initial configuration ends in $p$. The set $\lang_{q_0\rightarrow p}$ is finite if and only if $p$ is a preamble state. Since $p \approx q$, the set $\lang_{q_0\rightarrow p} \cdot (\lang(\autA_p) \symdiff \lang(\autA_q))$ contains only finitely many word types if and only if $\lang_{q_0\rightarrow p}$ contains only finitely many word types. Accordingly, we can merge $p$ into $q$ only if $p$ is a preamble state; otherwise, the resulting automaton would not be almost-equivalent to the original.

Consider Case (ii). If $(p,u) \approx (q,v)$, \Cref{lem:hyper-data-minimal_bound} implies that for every symbol $a$, $a \in u \Leftrightarrow a \in v$. However, if $p$ and $q$ have different register types, then $u \not\sim_R v$. Suppose $w, x, y$ are three words such that $|w| > \lenTA$, $(q_0,\epsilon) \xrightarrow{x} (p,u) \xrightarrow{w} c$, and $(q_0,\epsilon) \xrightarrow{y} (q,v) \xrightarrow{w} d$ for some configurations $c$ and $d$. Since $(p,u) \approx (q,v)$, $c$ and $d$ are either both accepting or both rejecting. Because all values in $u$ and $v$ are $\lenTA$-memorable, we can assume w.l.o.g. that modifying these values would cause $c$ and $d$ to switch their acceptance status (from accepting to rejecting, or vice versa). Furthermore, since $u \not\sim_R v$, we cannot redirect transitions ending in $p$ to $q$ without eliminating values in $u$ and $v$. Accordingly, any such elimination would result in an automaton that is no longer almost-equivalent to the original.

\begin{restatable}{lemma}{lemDraToAlmostEqDra}\label{lem:dra_to_almost-eq_dra}
    Let $\autA$ be a \dra, and $p, q$ be two almost-equivalent states of $\autA$ with the same register type. If $p$ is preamble, then the \dra $\autB$ obtained by performing the merging operation $\merge(p \rightarrow q, \autA)$ is almost-equivalent to $\autA$.
\end{restatable}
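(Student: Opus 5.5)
We compare runs of $\autA$ and $\autB=\merge(p\rightarrow q,\autA)$ on the same input word and show that every word in $\lang(\autA)\symdiff\lang(\autB)$ factors as a prefix reaching $p$ followed by a suffix on which $p$ and $q$ disagree. Since $p$ and $q$ have the same register type, redirecting every incoming transition $(r,\tau,E,p)$ to $(r,\tau,E,q)$ yields a well-defined, deterministic, well-typed \dra: the register word produced by such a transition is exactly the word that was previously stored upon entering $p$, and it is of the type that $q$ expects.

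\textbf{Step 1 (decomposition).} Let $w\in\lang(\autA)\symdiff\lang(\autB)$. Consider the run of $\autA$ on $w$. If this run never visits $p$, then $\autB$ performs the identical run, because only transitions into $p$ were changed. So the run visits $p$. Since $p$ is a preamble, no path from $q_0$ to $p$ contains a loop, so the run visits $p$ at most once.

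Write $w=x\cdot y$, where $(q_0,\emptyword)\xrightarrow{x}_\autA(p,u)$ is the first (and only) visit. Then $(q_0,\emptyword)\xrightarrow{x}_\autB(q,u)$, because the prefix before the last step avoids $p$ and the last transition is redirected with the same $E$.

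From $(q,u)$, the continuation in $\autB$ is the run of $\autA$ from $(q,u)$, provided that run does not pass through $p$. If it does pass through $p$, it is again redirected to $q$. To handle this cleanly, I will show by induction on the number of redirections that each of them lands at an almost-equivalent configuration; alternatively, if $p$ is reachable from $q$, then $q$ is a preamble, and the same argument applies. The simplest route is to observe that $\autB$ has strictly fewer states, so $\lang(\autB_{(q,u)})$ can be compared with $\lang(\autA_{(p,u)})$ via the chain $(p,u)\approx(q,u)$. Here I need $(p,u)\approx(q,u)$ for this specific register word $u$.

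\textbf{Step 2 (the key obstacle).} \Cref{def:dra-almost-eq-states} only gives, for this $u$, some $v$ of the same type with $(p,u)\approx(q,v)$, and not necessarily $v=u$. This is the point where I expect difficulty. I will try \Cref{lem:almost-eq_symbols_range}: it allows replacing $v$ by any $v'$ with $uv\sim_R uv'$. So the question is whether the type of $u\cdot v$ forces $v=u$. If $v\ne u$, I would argue that merging should use a suitable renaming, but the merge keeps the data $u$.

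I will therefore aim to prove, using \Cref{lem:hyper-data-minimal_bound} (in the algorithm all registers of preamble states are $\lenTA$-memorable after \datamin), that $(p,u)\approx(q,v)$ forces $v=u$ as a set of values. If some value of $u$ were absent from $v$, perturbing it would flip acceptance from $(p,u)$ on infinitely many types while leaving $(q,v)$ unchanged, contradicting almost-equivalence. Together with the equality of register types, this gives $v=u$. In the generic statement without memorability, I fall back on \Cref{lem:almost-eq_symbols_range} in the identity/dense setting to choose $v$ whose type relative to $u$ coincides with that of $u$.

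\textbf{Step 3 (finiteness).} With $(p,u)\approx(q,u)$ in hand, every $w$ in the symmetric difference has the form $x\cdot y$ with $x\in\lang_{q_0\rightarrow p}$, which consists of finitely many word types because there are finitely many loop-free paths and each path fixes the word type of $x$. For the suffix, $y$ lies in $\lang(\autA_{(p,u)})\symdiff\lang(\autA_{(q,u)})$ (using \Cref{lem:almost-eq_to_almost-eq} to handle repeated redirections), which has finitely many types by \Cref{def:dra-almost-eq-configurations}. The word type of $x\cdot y$ is determined by the type of $x$ together with the type of $y$ relative to the registers $u$. This relative type is finite data because $|y|\le\lenTA$ by \Cref{lem:almost-eq_with_lenTA_bound}, so only finitely many combined types arise, and $\autB$ is almost-equivalent to $\autA$.
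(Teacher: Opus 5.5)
You follow the paper's decomposition in Steps~1 and~3: a prefix reaching $p$, which has finitely many types because $p$ is a preamble, followed by a suffix on which the two sides disagree. You also correctly isolate the point the paper's proof passes over. The merge turns $(p,u)$ into $(q,u)$, so the suffix bound needs $(p,u)\approx(q,u)$, but $p\approx q$ only supplies some $v$ with $(p,u)\approx(q,v)$. The paper simply takes the bound ``because $p\approx q$''.

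\textbf{First gap: Step~2 is not closed, and is false in general.}
\begin{itemize}
\item The fallback via \Cref{lem:almost-eq_symbols_range} is circular. That lemma only moves $v$ within the type of $u\cdot v$, and $u\cdot v\sim_R u\cdot u$ already forces $v=u$.
\item The memorability route relies on \Cref{lem:hyper-data-minimal_bound}, i.e.\ on hyper-data-minimality, which is not a hypothesis of the statement. Even with it, ``same set of values and same register type'' does not force $v=u$ when $R$ is the identity, since $ab$ and $ba$ have the same type.
\item Over $(\mathbb{Q},<)$, let prefixes $a_1a_2$ with $a_1<a_2$ lead to $p$ storing $a_1a_2$. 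Let other prefixes lead to $q$ storing $y_1y_2$ with $y_1<y_2$. Let $p$ (resp.\ $q$) accept exactly the non-empty continuations all of whose letters exceed its first (resp.\ second) register. Then $(p,a_1a_2)\approx(q,y_1y_2)$ iff $y_2=a_1$. So $p\approx q$, both have type $1\cdot 2$, and $p$ is a preamble. Yet after the merge every word $a_1a_2c^m$ with $a_1<c<a_2$ and $m\geq 1$ lies in the symmetric difference.
\item Over the identity domain, let $(p,x_1x_2)$ accept exactly the continuations beginning with $x_1x_2$, and $(q,y_1y_2)$ those beginning with $y_2y_1$. This fails the same way, even though every register value is memorable.
\end{itemize}

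\textbf{Second gap: $q$ may reach $p$ in $\mathcal{A}$.} Then the merge creates a cycle through $q$.
\begin{itemize}
\item The prefixes reaching $q$ in $\mathcal{B}$ are no longer of finitely many types. ``$q$ is a preamble'' holds in $\mathcal{A}$, not in $\mathcal{B}$.
\item \Cref{lem:almost-eq_to_almost-eq} tells you each redirection lands in an almost-equivalent configuration. That does not bound the number of erroneous types, because the error can be incurred unboundedly often.
\item ``$\mathcal{B}$ has fewer states'' is not an argument.
\item This already fails without registers. For the language $\Sigma^{\geq 2}$, the minimal automaton $q_0\to p\to s$, with a loop at the accepting state $s$, has $p\approx q_0$ and $p$ a preamble. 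Merging $p$ into $q_0$ yields the empty language.
\end{itemize}

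So the statement needs extra hypotheses: for instance, that $p$ is not reachable from $q$ (as when $q$ is a kernel state), and that $(p,u)\approx(q,u)$ for the register words $u$ with which $p$ is entered. Under these, the runs of $\mathcal{B}$ from $(q,u)$ coincide with those of $\mathcal{A}$, and your Steps~1 and~3 go through as written.
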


So, according to \Cref{lem:dra_to_almost-eq_dra}, at the end of \Cref{line:merge_end} of \Cref{algo:hyper-minimization}, the resulting \dra is almost-equivalent to the input \dra.

\begin{restatable}{lemma}{lemMappingBwAlmostEqDras}\label{lem:mapping_bw_almost-eq_dras}
    For every two almost-equivalent \fullhdm \dra{s} $\mathcal{A}$ and $\mathcal{B}$ with sets $Q_\mathcal{A}$ and $Q_\mathcal{B}$ of states, respectively, 
    there exists a function $h: Q_\mathcal{A}\rightarrow Q_\mathcal{B}$ such that for all $q\in Q_\mathcal{A}$:
    \begin{enumerate}[label=(a)]
        \item\label{lem:mapping_bw_almost-eq_dras:prop:a} $q\approx h(q)$ and $q, h(q)$ have the same register type.
        \item\label{lem:mapping_bw_almost-eq_dras:prop:b} $h(q)$ is a kernel state if $q$ is a kernel state.
    \end{enumerate}
\end{restatable}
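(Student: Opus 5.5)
Following the DFA hyper-minimization argument of Badr et al., I would define $h$ through synchronized reachability. Both automata are complete, well-typed, trimmed and hyper-data-minimal, and $\lang(\autA)$, $\lang(\autB)$ are almost-equal. For each state $q\in Q_\autA$, fix a word $x_q$ with $(q_0^\autA,\emptyword)\xrightarrow{x_q}(q,u_q)$, chosen according to the case analysis below. Let $(h(q),v_q)$ be the configuration reached by $\autB$ on $x_q$. The key claim is that two initial configurations whose languages are almost-equal stay almost-equivalent along any common word. Indeed, if $(q_0^\autA,\emptyword)\approx(q_0^\autB,\emptyword)$, then Lemma~\ref{lem:almost-eq_to_almost-eq}, applied letter by letter in its cross-automaton form, gives $(q,u_q)\approx(h(q),v_q)$. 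Then Lemma~\ref{lem:almost-eq-configuration_imp_almost-eq-states} yields $q\approx h(q)$.

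For the register type in (a), I would use hyper-data-minimality and argue as the paper does for Case (ii). Suppose $q$ is a kernel state. Then I pick $x_q$ long enough, which is possible because there are infinitely many paths to $q$. By Lemma~\ref{lem:hyper-data-minimal_bound} and the discussion preceding it, every symbol held at a kernel state is $L$-memorable. For long prefixes, every stored value must then be memorable at the level of the language itself: if it mattered only for finitely many word types of extensions, pumping the prefix (Lemma~\ref{lem:pump_two_runs}) would produce infinitely many word types in the symmetric difference. So $u_q$ and $v_q$ must consist of the same set of values. Now suppose $q$ is a preamble state. Lemma~\ref{lem:hyper-data-minimal_bound} says every value in $u_q$ is $\lenTA$-memorable. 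A value of $u_q$ absent from $v_q$ could then be perturbed, in the manner of Lemma~\ref{lem:almost-eq_symbols_range}, to change acceptance in $\autA$ on some long extension but not in $\autB$. Pumping that extension with Lemma~\ref{lem:pump_two_runs} would give infinitely many distinguishing word types, contradicting $(q,u_q)\approx(h(q),v_q)$ via Corollary~\ref{lem:almost-eq_with_lenTA_bound}. The symmetric argument applies to values of $v_q$, with $\autB$ also hyper-data-minimal. Since the values coincide as sets and appear in a single word, the order relation among them is the same, so $u_q\sim_R v_q$. By well-typedness, $q$ and $h(q)$ therefore have the same register type.

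For (b), let $q$ be a kernel state. I would choose $x_q$ so that the run of $\autA$ on it passes through a loop. The product of the two runs is deterministic, so I consider the pair-state sequence along $\autA$'s run. If $h(q)$ were a preamble state, only finitely many paths would lead to it in $\autB$, and hence its incoming words would have bounded length. Yet there are infinitely many, arbitrarily long words leading to $q$. For each such word $y$ I would argue that $\autB$ on $y$ reaches some state almost-equivalent to $q$ with matching type. Then I would pigeonhole over the finitely many states of $\autB$ and pick $h(q)$ to be a state hit by words of unbounded length; such a state is necessarily a kernel state. I would also use Lemma~\ref{lem:pump_two_runs} on the pair of runs to make sure that pumping in $\autA$ corresponds to pumping in $\autB$.

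The main obstacle is the register-type part of (a) for kernel states. It requires converting "finitely many distinguishing word types" into equality of the stored value sets, and that needs careful use of density (Lemma~\ref{lem:almost-eq_symbols_range}) together with the pumping bound $\lenTA$ adjusted to the two-automaton setting. I would attempt this first, by contradiction: take a value stored in only one of the two configurations, exhibit an extension whose acceptance depends on that value, and pump it to obtain unboundedly many distinguishing word types.
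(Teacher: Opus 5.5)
The skeleton you use is the paper's: synchronized runs from the two initial configurations, propagation of $\approx$ by \Cref{lem:almost-eq_to_almost-eq}, then \Cref{lem:almost-eq-configuration_imp_almost-eq-states}. Your pigeonhole choice of $x_q$ for kernel states is a real addition, since the paper's proof never argues (b).

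\textbf{The gap.} The step that fails is ``the symmetric argument applies to values of $v_q$'' in the preamble case. For a value $c\in v_q\setminus u_q$ you need a long extension on which $c$ matters, i.e.\ $c$ must be $\lenTA$-memorable at $(h(q),v_q)$. \Cref{lem:hyper-data-minimal_bound} gives this only when $h(q)$ is a preamble state of $\autB$. Suppose instead that $q$ is preamble in $\autA$ but $h(q)$ is a kernel state of $\autB$. Then hyper-data-minimality says nothing about the registers of $h(q)$, and a kernel state may legitimately keep a value that influences only finitely many extension types. In that case no long distinguishing word exists, so $(q,u_q)\approx(h(q),v_q)$ is not contradicted. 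Your kernel-case trick of pumping the prefix is also unavailable, because $\autA$'s run into the preamble state $q$ cannot be pumped.

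\textbf{The statement fails in this case.} Over $(\mathbb{Q},=)$, let $\autB$ have states $t$ (initial), $s$, $f$ (accepting) and $g$, with these transitions:
\begin{itemize}
\item $t$ stores the letter and goes to $s$;
\item from $s$, a letter equal to the register leads to $f$ and any other letter leads to $g$, clearing the registers;
\item $f$ and $g$ return to $t$ on any letter.
\end{itemize}
Then $\lang(\autB)=\{w\mid |w|\equiv 2 \pmod 3,\ \text{the last two letters of } w \text{ are equal}\}$, and every state of $\autB$ is a kernel state. Let $\autA$ be $\autB$ preceded by register-free states $p_0\to p_1\to g$ with $p_0$ initial. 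Then $\lang(\autA)=\lang(\autB)\cap\alphabet^{\ge 5}$, which differs from $\lang(\autB)$ in a single word type.

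Both automata are canonical for their languages. Each uses one register, which any automaton almost-recognizing these languages needs. Both satisfy the criterion of \Cref{lem:hyper-data-minimal_bound} vacuously.

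Now $p_1$ has the empty register type, and $\lang(\autA_{p_1})$ consists of the words $y$ with $|y|\equiv 1\pmod 3$, $|y|\ge 4$ and equal last two letters. The register-free states $t,f,g$ of $\autB$ accept only lengths $\equiv 0$ or $2 \pmod 3$, so none of them is almost-equivalent to $p_1$. The only state almost-equivalent to $p_1$ is $s$ (indeed $(p_1,\emptyword)\approx(s,c)$), and $s$ has one register. Hence no $h$ satisfies (a), and no proof can close your gap without an extra hypothesis ruling out this mixed preamble/kernel case.

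The paper's proof has the same hole. It concludes $a\in u\Leftrightarrow a\in v$ from \Cref{lem:hyper-data-minimal_bound} without checking that $h(p)$ is a preamble state.

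Separately, your claim that every symbol held at a kernel state is $L$-memorable does not follow from hyper-data-minimality or from \Cref{lem:hyper-data-minimal_bound}. It requires the automata to be canonical, which the paper also assumes tacitly.
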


Suppose that $\autA$ is the  \dra obtained at the end of \Cref{line:merge_end} of \Cref{algo:hyper-minimization}, $\autB$ is a \fullhsm \dra, and $h$ is the mapping satisfying \Cref{lem:mapping_bw_almost-eq_dras}. For every $p, q \in Q_\autA$, if $h(p)=h(q)$, then by \Cref{lem:mapping_bw_almost-eq_dras:prop:a}, we have $p \approx q$, and they have the same register type. If at least one of $p$ or $q$ is a preamble state, then they must be the same state (i.e., $p=q$); otherwise, they would have been merged during the execution of \Cref{algo:hyper-minimization}. If both are kernel states, then by \Cref{lem:almost-eq-configuration_imp_almost-eq-states} and \Cref{lem:mapping_bw_almost-eq_dras:prop:b}, it follows that $p \equiv q$. In this case, we must also have $p=q$, as they otherwise would have been merged by the algorithm. Accordingly, $h$ is an injective mapping, and the \dra resulting from \Cref{line:merge_end} of \Cref{algo:hyper-minimization} is a \fullhsm  \dra for the given input.
Consequently, we obtain the following result:

\begin{theorem}
    For every input  \dra $\autA$, the output \dra of \Cref{algo:hyper-minimization} is a \fullhm  \dra for the language $\lang(\autA)$.
\end{theorem}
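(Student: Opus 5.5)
We prove the theorem by showing that the output $\autB$ of \Cref{algo:hyper-minimization} has three properties: it almost-recognizes $\lang(\autA)$, it is \fullhdm, and it is \fullhsm.

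\emph{Almost-equivalence.} Line~\ref{line:minimal} preserves the language exactly, by \Cref{thm:myhill}. In \Cref{algo:hyper-data-minimization}, each removal of a non-$\lenTA$-memorable symbol from a preamble state's register type preserves almost-equivalence, by \Cref{lem:nondeterminitic_removeble}. Each merge in the loop at lines~\ref{line:merge_start}--\ref{line:merge_end} also preserves it: for a preamble $p$ with $p\approx q$ this is \Cref{lem:dra_to_almost-eq_dra}, and for $p\equiv q$ the language is even unchanged. There are finitely many steps, since every step either removes a register symbol or removes a state. Almost-equality is transitive because a finite union of finite sets of word types is finite. Hence $\lang(\autB)$ is almost-equal to $\lang(\autA)$.

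\emph{Hyper-data-minimality.} We check the criterion of \Cref{lem:hyper-data-minimal_bound}: every configuration $(p,u)$ at a preamble state $p$ must retain only $\lenTA$-memorable symbols. \Cref{algo:hyper-data-minimization} enforces this on the representative configuration of each preamble state. Two points then need care. First, the property must transfer from the representative to all configurations of the same state; for this I would use well-typedness together with the invariance of acceptance under type-preserving renaming, i.e.\ the argument behind \Cref{lem:almost-eq_symbols_range}. Second, later merges must not destroy the property. They cannot: a merge only redirects transitions between states of the same register type, and it does not change register contents. One caveat is that \Cref{lem:hyper-data-minimal_bound} is stated for minimal \dra{s}. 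I would handle this by noting that the condition it uses concerns only the kernel part, which minimization and the data step leave untouched, so the lemma still applies.

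\emph{Hyper-state-minimality.} This is the main step and follows the argument sketched just before the theorem. Let $\autB'$ be any \dra almost-recognizing $\lang(\autA)$ with the fewest states. We may first replace $\autB'$ by a \fullhdm one with no more states, by applying the same minimization and data-minimization steps to it. We then need an injection $Q_\autB\to Q_{\autB'}$. \Cref{lem:mapping_bw_almost-eq_dras} gives a map $h$ such that $q\approx h(q)$ with the same register type, and kernel states go to kernel states. Suppose $h(p)=h(q)$ with $p\ne q$. Then $p\approx q$, with the same type, by transitivity of $\approx$ on states; this follows from \Cref{lem:almost-eq-configuration_imp_almost-eq-states} and composing configuration witnesses. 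There are two cases.
\begin{itemize}
\item If one of $p,q$ is preamble, the loop would have merged them.
\item If both are kernel, then $h(p)$ is kernel and both are almost-equivalent to it. To conclude $p\equiv q$, take configurations $(p,u)\approx(q,v)$ reachable via infinitely many word types. A disagreement word $w$ could be prefixed by infinitely many distinct word types reaching $p$, all giving disagreements in the global languages, contradicting almost-equality. Hence $(p,u)\equiv(q,v)$, so $p\equiv q$, and the loop would again have merged them.
\end{itemize}
So $h$ is injective and $|Q_\autB|\le|Q_{\autB'}|$.

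I expect the hardest part to be this kernel case, in particular making precise that the infinitely many prefixes give infinitely many \emph{distinct word types} of disagreement. I would first try to pump, via \Cref{lem:pump_two_runs}, a loop on a path into $p$ synchronized with the corresponding path into $h(p)$, so that the lengths, and hence the word types, differ.
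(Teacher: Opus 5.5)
Your decomposition is the paper's: almost-equivalence via \Cref{lem:nondeterminitic_removeble} and \Cref{lem:dra_to_almost-eq_dra}, hyper-data-minimality via \Cref{lem:hyper-data-minimal_bound}, and injectivity of the map $h$ of \Cref{lem:mapping_bw_almost-eq_dras} with a preamble/kernel split. The preamble case and the remaining parts are at the paper's level of detail. The gap is the kernel case, and the obstacle is not the one you anticipated.

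\textbf{Where the kernel case breaks.} You compare $(p,u)$ with $(q,v)$, but $p$ and $q$ both live in $\autB$ and are reached by different prefixes. Prefixing a disagreement between them by words reaching $p$ therefore produces nothing in $\lang(\autB)\symdiff\lang(\autB')$. The comparison has to be between $p$ and $h(p)$ along \emph{common} prefixes. For that you need infinitely many word types leading simultaneously to $p$ in $\autB$ and to $h(p)$ in $\autB'$.

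Property (b) does not provide this. It only says $p$ and $h(p)$ are kernel separately, while $h$ is built from an arbitrary representative $w_p$. So the pair $(p,h(p))$ may be reachable in the product only finitely often, and then there is no synchronized loop to pump. In that situation the conclusion genuinely fails, already in the register-free case. Over $\{a,b,c\}$, define $\autB$ as follows:
\begin{itemize}
\item $k_1$ is accepting, with all letters looping;
\item $k_2$ is rejecting, with all letters going to $k_1$;
\item $k_3$ is rejecting, with $a\to k_2$ and $b,c\to k_3$;
\item the preamble initial state $q_0$ is rejecting, with $a\to k_2$, $b\to k_3$, $c\to k_1$.
\end{itemize}
Let $\autB'$ be the copy with $q_0'\xrightarrow{a}k_1'$ instead.
\begin{itemize}
\item The two languages differ only on the word $a$.
\item $\autB$ is left unchanged by \Cref{algo:hyper-minimization}.
\item With shortest representatives, $h(k_1)=h(k_2)=k_1'$. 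This $h$ satisfies (a) and (b), yet $k_1\not\equiv k_2$, since they differ on $\emptyword$.
\item The only common prefix reaching $(k_2,k_1')$ is $a$.
\end{itemize}
The paper's own one-line appeal to (b) at this point has the same hole, so citing it would not rescue the step.

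\textbf{The missing idea.} Choose $h$ so that the pumping is possible. A kernel state $p$ is reached by arbitrarily long words, so take $w_p$ longer than the two-automaton bound $|Q_\autB|\cdot|Q_{\autB'}|\cdot(k+k')!$ noted after \Cref{lem:almost-eq_with_lenTA_bound}, where $k,k'$ are the register counts. Property (a) survives this choice, because its proof only follows the run on $w_p$ using \Cref{lem:almost-eq_to_almost-eq} and hyper-data-minimality.

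\Cref{lem:pump_two_runs} then gives prefixes of strictly increasing length reaching $(p,\sigma^i(u))$ and $(h(p),\sigma^i(u'))$, for an order-preserving $\sigma$. Suppose $w$ separated $(p,u)$ from $(h(p),u')$. Then each such prefix followed by $\sigma^i(w)$ lies in $\lang(\autB)\symdiff\lang(\autB')$. These words have distinct lengths, hence infinitely many word types, which contradicts almost-equality.

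Hence $(p,u)\equiv(h(p),u')$, and so $p\equiv h(p)$ by \Cref{lem:almost-eq-configuration_imp_almost-eq-states}. Likewise $q\equiv h(q)$. Then $h(p)=h(q)$ gives $p\equiv q$ by transitivity, with the same register type, which the termination condition of the merge loop excludes. The preamble case is unaffected by this change of $h$.
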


\section{Conclusion}



We introduced a notion of hyper-minimization for DRAs and presented an algorithm to compute hyper-minimal DRAs. To the best of our knowledge, this is the first work addressing hyper-minimization in this setting.

Our algorithm follows the general structure of classical hyper-minimization for DFAs: it checks almost-equivalence between states and performs merge operations. Extending this approach to DRAs, however, is technically challenging. In particular, it requires an additional hyper-data-minimization step, and all components—hyper-data-minimization, almost-equivalence checking, and state merging—must be carefully designed to ensure that the resulting automaton is both hyper-data-minimal and hyper-state-minimal.
Finally, we note that, as in the DFA setting, hyper-minimal DRAs are \emph{not} unique.


\bibliographystyle{abbrv}
\bibliography{refs}

\newpage
\appendix
\section{Missing Proofs of \Cref{sec:dra_properties}}

\subsection{Proof of \Cref{lem:pump_two_runs}}

\LemPumpTwoRuns*
\begin{proof}
    Let $w$ be a word of length greater than $\lenTA$, and let $\pi: (p,u) \xrightarrow{w} (r,x)$ and $\rho: (q,v) \xrightarrow{w} (s,y)$. By the pigeonhole principle, there exists a decomposition $w = \nu \omega_0 \mu$ with $\omega_0$ non-empty such that:
    \[ \pi = (p,u) \xrightarrow{\nu} (c, \alpha_0) \xrightarrow{\omega_0} (c, \alpha_1) \xrightarrow{\mu} (r,x) \]
    and
    \[ \rho = (q,v) \xrightarrow{\nu} (d, \beta_0) \xrightarrow{\omega_0} (d, \beta_1) \xrightarrow{\mu} (s,y) \]
    for some configurations $(c, \alpha_0), (c, \alpha_1), (d, \beta_0)$, and $(d, \beta_1)$ where $\alpha_0 \cdot \beta_0 \sim_R \alpha_1 \cdot \beta_1$. 

    Therefore, there exists an order-preserving mapping $\sigma$ such that $\sigma(\alpha_0) = \alpha_1$ and $\sigma(\beta_0) = \beta_1$. Let $\alpha_{i+1} = \sigma(\alpha_i)$, $\beta_{i+1} = \sigma(\beta_i)$, and $\omega_{i+1} = \sigma(\omega_i)$ for all $i \in \mathbb{N}$. Since $\sigma$ is order-preserving, we have $\sigma(\alpha_i \cdot \beta_i) \sim_R \alpha_i \cdot \beta_i$ for all $i$, and thus $\alpha_i \cdot \beta_i \sim_R \alpha_{i+1} \cdot \beta_{i+1}$ for all $i$. 

    Accordingly, 
    \[ (p,u) \xrightarrow{\nu} (c, \alpha_0) \xrightarrow{\omega_0} (c, \alpha_1) \dots \xrightarrow{\omega_{i-1}} (c, \alpha_i) \xrightarrow{\sigma^i(\mu)} (r, \sigma^i(x)) \]
    and
    \[ (q,v) \xrightarrow{\nu} (d, \beta_0) \xrightarrow{\omega_0} (d, \beta_1) \dots \xrightarrow{\omega_{i-1}} (d, \beta_i) \xrightarrow{\sigma^i(\mu)} (s, \sigma^i(y)) \]
    are valid runs on the word $\nu \cdot \omega_0 \dots \omega_{i-1} \cdot \sigma^i(\mu)$ for all $i \in \mathbb{N}$, where $\omega_0 \dots \omega_{i-1} = \epsilon$ if $i=0$. Consequently, we derive the statement.
    \qed
\end{proof}

\section{Hyper-Minimization for DFAs}
\label{app:dfa-hypermin}
\begin{algorithm}[htb]
\caption{Hyper-minimization of a DFA $\autA$.}
\label{algo:dfa-hyper-minimization}
    $\autA = \textit{minimize}(\mathcal{A})$\;
    \While{$\exists p\neq q. p\approx q \text{ and }p\text{ is a preamble state}$}{
        $\autA = \merge(p \rightarrow q, \autA)$\;
    }
    return $\autA$\;
\end{algorithm}

\section{Missing Proofs of \Cref{sec:hyper-minimization}}

\subsection{Proof of \Cref{lem:dra-word-bound}}

\lemDraWordBound*
\begin{proof}
    First, consider the ``if'' direction. Suppose that for all $w \in \finwords$ with $|w| \ge \ell$, the configurations reached from $(p,u)$ and $(q,v)$ after reading $w$ are either both accepting or both rejecting. Since the number of word types of lengths up to $\ell$ is finite, there are only finitely many word types $w$ such that $(p, u) \xrightarrow{w} (p', u')$ and $(q, v) \xrightarrow{w} (q', v')$ with $p' \in F \not\Leftrightarrow q' \in F$. Thus, $(p,u) \approx (q,v)$.

    Now, consider the ``only-if'' direction. Assume that $(p, u) \approx (q, v)$. By \Cref{def:dra-almost-eq-configurations}, there are finitely many word types $w$ such that $(p, u) \xrightarrow{w} (p', u')$ and $(q, v) \xrightarrow{w} (q', v')$ with $p' \in F \not\Leftrightarrow q' \in F$. Let $\ell$ be the maximum length among these word types. Then, if $(p, u) \xrightarrow{w} (r, x)$ and $(q, v) \xrightarrow{w} (s, y)$ for any $w$ with $|w| > \ell$, it must be that $r$ and $s$ are either both accepting or both rejecting.
    \qed
\end{proof}

\subsection{Proof of \Cref{lem:almost-eq_to_almost-eq}}

\lemAlmostEqToAlmostEq*
\begin{proof}
    Assume by contraposition that $f \not\approx g$. Let $\ell$ be the bound satisfying \Cref{lem:dra-word-bound}. Since $f \not\approx g$, there exists a word $w$ with $|w| > \ell$ such that exactly one of the runs from $f$ and $g$ on $w$ is accepting. This implies that $aw$ is accepted starting from exactly one of $c$ and $d$. Since $|aw| > \ell$, this contradicts the fact that $c \approx d$.
    \qed
\end{proof}

\subsection{Proof of \Cref{lem:almost-eq-configuration_imp_almost-eq-states}}

\lemAlmostEqConfigurationImpAlmostEqStates*
\begin{proof}
    Let $u'$ have the same type as $u$. Then there exists an order-preserving mapping $\sigma$ such that $\sigma(u)=u'$. Let $v'=\sigma(v)$. Given $w \in \Sigma^{> \lenTA}$, by \Cref{lem:pump_two_runs} and \Cref{lem:dra-word-bound}, the configuration reached from $(p,u')$ after reading $w$ is accepting iff the configuration reached from $(p,u)$ after reading $\sigma^{-1}(w)$ is accepting. Since $(p, u) \approx (q, v)$, the configurations reached from $(p,u)$ and $(q,v)$ after reading $\sigma^{-1}(w)$ are either both accepting or both rejecting.
    
    Similarly, the configuration reached from $(q,v')$ after reading $w$ is accepting iff the configuration reached from $(q,v)$ after reading $\sigma^{-1}(w)$ is accepting. As a result, $(p,u') \approx (q,v')$. By the same argument, for every word $v'$ of the same type as $v$, there exists a word $u'$ of the same type as $u$ such that $(q,v') \approx (p,u')$, and vice versa, which then entails that $p \approx q$.
    \qed
\end{proof}

\subsection{Proof of \Cref{lem:almost-eq_symbols_range}}

\lemAlmostEqSymbolsRange*
\begin{proof}
    Let $v' $ be a word such that $u\cdot v \sim_R u\cdot v'$.
    Let $\sigma$ be an order-preserving mapping such that $\sigma(u\cdot v')=u\cdot v$.
    Since $(p,u) \approx (q,v)$, by \Cref{lem:dra-word-bound}, there exists $\ell$ such that for every $w \in \Sigma^{\geq \ell}$, the configurations reached by $\autA$ on $w$ from $(p,u)$ and $(q,v)$ are either both accepting or rejecting.
    Let $w' \in \alphabet^{\geq \ell}$.
    We then have $\sigma(w') \in \alphabet^{\geq \ell}$ as well.
    It then follows that the configurations reached by $\autA$ on $\sigma(w')$ from $(p, u)$ and $(q, v)$ are either both accepting or rejecting. 
    Further, by applying the mapping $\sigma^{-1}$, we have that the configurations reached by $\autA$ on $\sigma^{-1}(\sigma(w')) = w'$ from $(p, \sigma^{-1}(u)) = (p, u)$ and $(q, \sigma^{-1}(v)) = (q, v')$ are either both accepting or rejecting. 
    Therefore, according to \Cref{lem:dra-word-bound}, $(p, u) \approx (q, v')$.
    \qed
\end{proof}

\subsection{Proof of \Cref{lem:decidability_of_ell_memorability}}

\lemDecidabilityOfEllMemorability*
\begin{proof}
    Suppose that $u=a_1\dots a_k$. If $a_i=a$ is $\ell$-memorable, there exist a word $w \in \Sigma^{\geq \ell}$ and a symbol $b$ such that $w\sim_R w'=w[a/b]$, and starting from $(p,u)$, exactly one of $w$ and $w'$ is accepted. 

    If $|w| > \lenTA$, there must be a repetition of the pair of states reached in the runs starting from $(p,u)$ on $w$ and $w'$. Specifically, let the runs be $\pi: (p,u) \xrightarrow{w} (q,v)$ and $\pi': (p,u) \xrightarrow{w'} (q',v')$. By the pigeonhole principle, there exist decompositions $w=w_1 w_2 w_3$ and $w'=w'_1 w'_2 w'_3$ with $|w_j|=|w'_j|$ for $j\in\{1,2,3\}$ and $w_2, w'_2$ non-empty, such that:
    \[ \pi = (p,u) \xrightarrow{w_1} (r,x) \xrightarrow{w_2} (r,y) \xrightarrow{w_3} (q,v) \]
    \[ \pi' = (p,u) \xrightarrow{w'_1} (s,x') \xrightarrow{w'_2} (s,y') \xrightarrow{w'_3} (q',v') \]
    where the state pairs $(r,s)$ at the start and end of $w_2$ and $w'_2$ are identical and $xx' \sim_R yy'$. Since $\autA$ is over a dense ordered domain or a non-ordered one, there exists an order-preserving mapping $\sigma$ such that $\sigma(y)=x$ and $\sigma(y')=x'$. By applying this mapping to the suffix, we can obtain shorter runs ending in $(q, \sigma(v))$ and $(q', \sigma(v'))$, respectively.

    As a result, we can assume $|w| \leq \max\{\ell, \lenTA\}$. Analogously to the construction of the alphabet for the set of words $W_\eta$ in \Cref{coro:relative-alphabet}, we can construct a finite alphabet $\Sigma'$ such that $a$ is $\ell$-memorable over $\Sigma$ if and only if it is $\ell$-memorable over $\Sigma'$. Since both the alphabet and the word length are finite, this property is decidable.
    As a result, we only need to consider words $w$ of length smaller than $\lenTA$ to determine whether a letter $a$ is $\ell$-memorable. As for the possible $w$, when we fix a configuration $(p, u)$ where $u$ is the register type of $p$, the possible extensions for $w$ and $b$ will be finite since we only need to cover all possible word types $uwb$, which is finite similar to the reason when we compute $W_{\eta}$ for checking almost-equivalence relation between two states. Therefore, the lemma follows.
    \qed
\end{proof}

\subsection{Proof of \Cref{lem:nondeterminitic_removeble}}

\lemNondeterministicRemovable*
\begin{proof}
    Let $w$ and $w'$ be two words of the same type starting with symbols $b$ and $b'$, respectively, both with length at least $\lenTA$. Let $\pi$ and $\pi'$ be the runs from $(p,u)$ on $w$ and $w'$, respectively. Either both runs result in acceptance or both result in rejection; otherwise, $a$ would be $\lenTA$-memorable, which is a contradiction. Accordingly, after removing $\delta$ or $\delta'$ and setting $p$ to be $v$-typed, the symmetric difference of the recognized languages consists only of words of length at most $\lenTA$.
    \qed
\end{proof}

\subsection{Proof of \Cref{lem:dra_to_almost-eq_dra}}

\lemDraToAlmostEqDra*
\begin{proof}
    For ease of explanation, we assume that $p$ is not an accepting state. Let $w \in \lang(\autA) \ominus \lang(\autB)$. Then either (1) $w \in \lang(\autA) \setminus \lang(\autB)$ or (2) $w \in \lang(\autB) \setminus \lang(\autA)$.

    First, consider Case (1). Let $\pi$ be the accepting run of $\autA$ on $w$. The only reason for $\pi$ not being a run of $\autB$ is that state $p$ is visited along $\pi$. Specifically,
    \[\pi = (q_0,\epsilon) \xrightarrow{w_1} (p,u) \xrightarrow{w_2} (f,r),\]
    where $w_1 w_2 = w$ for some word $u$ and accepting configuration $(f,r)$ in $\A$. Since $p$ is a preamble state in $\A$, there are only finitely many such word types $w_1$ leading $\A$ to $p$. Furthermore, because $p \approx q$, there are only finitely many word types $w_2$ such that $(p,u) \xrightarrow{w_2} (f,r)$ while $w_2$ is not accepted from $q$ in $\A$. Accordingly, there are only finitely many word types $w = w_1 \cdot w_2 \in \lang(\autA) \setminus \lang(\autB)$.

    Next, consider Case (2). Let $\pi$ be the accepting run of $\autB$ on $w$. The only reason for $\pi$ not being a run of $\autA$ is that a transition $(t,\tau,E,q)$ to $q$ is used, which is not a transition of $\autA$; instead, $(t,\tau,E,p)$ is a transition of $\autA$. Similarly, $\pi$ can be decomposed into two parts:
    \[\pi = (q_0,\epsilon) \xrightarrow{w_1} (q,v) \xrightarrow{w_2} (f,r),\]
    where $w_1 w_2 = w$ for some word $v$ and accepting configuration $(f,r)$, and the last transition applied along the run $(q_0,\epsilon) \xrightarrow{w_1} (q,v)$ is $(t,\tau,E,q)$. Therefore, 
    \[(q_0,\epsilon) \xrightarrow{w_1} (p,u)\]
    is a valid run of $\autA$ on $w_1$ for some $u$. Again, since $p$ is a preamble state, there are finite many word types $w_1$.
    Moreover, since $p \approx q$, there are only finitely many such word types $w_2 \in \lang(\A_q)\setminus \lang(\A_p)$.

    We thus conclude that there are only finitely many word types $w \in \lang(\A) \symdiff \lang(\autB)$.
    So, $\A$ is almost-equivalent to $\autB$.
    \qed
\end{proof}

\subsection{Proof of \Cref{lem:mapping_bw_almost-eq_dras}}

\lemMappingBwAlmostEqDras*
\begin{proof}
    For each $q \in Q_\autA$, let $[q]_\autA$ denote the class of words $w$ such that $(q^\autA_0, \epsilon) \xrightarrow{w}_\autA (q, v)$ for some $v$, where $q^\autA_0$ is the initial state of $\autA$. We define $[q]_\autB$ similarly for each $q \in Q_\autB$. Let $w_q$ be a representative word in $[q]_\autA$; we then define the mapping $h$ such that $h(q) = p$ if $w_q \in [p]_\autB$ for each $q \in Q_\autA$.

     For $\mathcal{C} \in \{\autA, \autB\}$, any state $q$ of $\mathcal{C}$, and any two words $u, v \in [q]_\mathcal{C}$, it follows from the definition of equivalence over data languages that $u \cong_{\lang(\mathcal{C})} v$.  The construction of $h$ yields $h(q^\autA_0) = q^\autB_0$. Furthermore, because $\autA \approx \autB$, we have $q^\autA_0 \approx q^\autB_0$ and, consequently, $q^\autA_0 \approx h(q^\autA_0)$. Given that the register type of the initial state is $\epsilon$, the definition of almost-equivalence over states implies that $(q^\autA_0, \epsilon) \approx (q^\autB_0, \epsilon)$.

    Now, let $p \in Q_\autA$ and $q \in Q_\autB$ such that $h(p) = q$. By definition, $(q^\autA_0, \epsilon) \xrightarrow{w_p}_\autA (p, u)$ and $(q^\autB_0, \epsilon) \xrightarrow{w_p}_\autB (q, v)$ for some $u, v$. By \Cref{lem:almost-eq_to_almost-eq}, we have $(p, u) \approx (q, v)$, and by \Cref{lem:almost-eq-configuration_imp_almost-eq-states}, it follows that $p \approx q$ and $p \approx h(p)$. By \Cref{lem:hyper-data-minimal_bound}, we have for each symbol $a$, $a \in u \Leftrightarrow a \in v$. Since both of the runs are on $w_p$, by the register updating rules, we also have $u=v$. Accordingly $p,q$ have the same register type. For any symbol $a \in \Sigma$, if $(p, u) \xrightarrow{a}_\autA (r, x)$ and $(q, v) \xrightarrow{a}_\autB (s, y)$, then $r \approx s$ must also hold. By the construction of $h$, we have $h(r) = s$. Accordingly, the statement holds by induction on the length of the words.
    \qed
\end{proof}

\section{Illustrative Examples}

\subsection{An Example of the Trade-off between the Number of Registers and the Number of States in Non-Well-Typed \dra{s}}\label{app:ex_trade-off}

\begin{example}\label{ex:inc_dec_sequences}
For each $n \in \mathbb{N}$, let $L_n$ be the language of words $w$ over $\alphabetQ$ that form a strictly increasing or decreasing sequence of length $n$.
For example, when $n=3$, the word $w = 2\cdot \text{-}1\cdot \text{-}3.5$ belongs to $L_n$, whereas $u = 1\cdot 2$ and $v = 2\cdot \text{-} 3\cdot 0$ do not.
Consider a $1$-\dra $\A_n$ that recognizes $L_n$.
It has two fashions---increasing and decreasing---and always stores the most recent input in its single register.
Given $w = a_1\dots a_m$, $\A_n$ enters the increasing (resp. decreasing) fashion if $a_2 > a_1$ (resp. $a_2 < a_1$).
For $2 < i < n$, it remains in the current fashion provided $a_i$ is larger (resp. smaller) than the value stored in the register; otherwise it rejects.
Length $m = n$ is verified using control states.
Since both fashions require $n-1$ intermediate states (plus initial and final states), $\A_n$ has $2n$ states.

While $\A_n$ is data-minimal for $L_n$, it is not state-minimal since an equivalent $(n-1)$-\dra $\A'_n$ with less states exists.
This automaton stores the first $n-1$ input symbols in its registers and checks on the fly whether these values form a strictly increasing or strictly decreasing sequence.
It needs $n+2$ states: $n+1$ states to verify the word length and one rejecting state to make $\A'_n$ complete.
\qed
\end{example}

\subsection{An Example of the Necessity of the Assumption on Preamble States}\label{app:ex_preamble_assumption}

\begin{example}
    Consider the \dra $\autA$ in \Cref{fig:DRA_mid_plus}. This \dra is a \fullhdm over $(\mathbb{Q},<)$ recognizing the language $L_{\textit{mid\_plus}} = \{a_1 a_2 \dots a_n \in \mathbb{Q}^* \mid \forall i > 2,\, a_1 < a_i < a_2\} \cup \{a_1 a_2 a_3 \in \mathbb{Q}^* \mid a_2 < a_3 < a_1\}$. Here, state $p$ is not $\lenTA$-memorable. Indeed, the automaton obtained by removing $p$ is almost-equivalent to the original, as only finitely many word types are accepted starting from $p$.
    \qed
\end{example}

\begin{figure}[]
\centering

 \begin{tikzpicture}[node distance=120pt]
 \tikzstyle{state}=[draw,shape=circle,minimum size=20pt]

 \node[state, initial] (p) at (0,0) {};
 \node[state] (r) at (2,0) {};
 \node[state,accepting] (s) at (4,1) {};
 \node[state] (sp) at (4,-1) {$p$};
 \node[state,accepting] (t) at (7,-1) {};

\path[->]  (p) edge  node[above,sloped] {$0: \emptyset$}
(r);
\path[->]  (r) edge  node[above,sloped] {$0\cdot 1: \emptyset$}
(s);
\path[->]  (r) edge  node[above,sloped] {$1\cdot 0: \emptyset$}
(sp);
\path[->]  (sp) edge  node[above,sloped] {$1\cdot 2\cdot 0: \emptyset$}
(t);
 \path[->]  (s) edge [loop above] node[above,sloped] {$0\cdot 2\cdot 1: \emptyset$}
(s);

 \end{tikzpicture}
\caption{A \fullhdm \dra $\mathcal{A}$ recognizing $L_{\textit{mid\_plus}}$.}
\label{fig:DRA_mid_plus}
\end{figure}

\section{Myhill-Nerode Theorem for \dra{s}}\label{app:myhill_nerode}

\begin{definition}[\cite{Ley:10:TR}]\label{def:eq-relation-data-languages}
Given a data language $L$ over $(\Sigma, R)$, we define an equivalence relation $\cong_L$ on $\Sigma^*$, where for every $u,v\in\Sigma^*$, we write $u \cong_L v$ if:
(1) $mem_L(u)
\sim_R mem_L(v)$, and (2) for all $x,y \in \Sigma^*$, if $mem_L(u)\cdot x \sim_R mem_L(v) \cdot y$, then $u\cdot x \in L \Leftrightarrow v\cdot y \in L$.
\end{definition}

Intuitively, equivalent words 
$u$ and $v$ must have memorable words of the same type. Furthermore, when their memorable-word extensions share the same type, extending $u$ and $v$ by the corresponding suffixes yields the same membership in $L$, since only memorable words affect future membership.

Let $L$ be a data language over $(\Sigma, R)$ and $k = \max \left\{|mem_L(u)| \mid u \in \finwords \right\}$. 
Let $[u]_{\cong_L}$ denote the equivalence class defined by $\cong_L$ where $u\in\finwords$ belongs to. 
We define the \emph{canonical $k$-\dra} 
$\mathcal{C}_L = (Q, q_0, F, \Delta)$ for $L$ as follows:
\begin{itemize}
    \item $Q = \bigcup_{i=0}^k Q_i$ where $Q_i = \left\{[u]_{\cong_L} \subseteq \finwords \mid u \in \finwords, |mem_L(u)| = i\right\}$, 
    \item $q_0 = [\emptyword]_{\cong_L}$, $F = \left\{[u]_{\cong_L} \in Q \mid u \in L\right\}$, and
    \item $([u]_{\cong_L},\tau,E, [u']_{\cong_L}) \in \Delta$ if there exists a symbol $a \in \Sigma$ such that:
    \begin{itemize}
        \item $u \cdot a \cong_L u'$, $mem_L(u\cdot a)$ has type $\tau$, and
        \item $E$ is the set of indices $i$ in the sequence
        $a_1 \dots a_d = mem_L(u) \cdot a$ such that either 
        $a_i \notin mem_L(u\cdot a)$ or $a_i = a$ for $i < d$.
    \end{itemize}
\end{itemize}
In general, $\mathcal{C}_L$ may have infinitely many states or registers and may be nondeterministic. 
When $L$ is \dra-recognizable, $\mathcal{C}_L$ is exactly the minimal \wdra for $L$.  
This Myhill–Nerode theorem for \dra{s} was established in~\cite{Ley:10:TR}:

\thmMyhill*

\end{document}